\def\d{\mathrm{d}}
\newcommand{\LT}{L_T^g}
\newcommand{\D}{\mathcal {D}}
\newcommand{\cov}{\mathrm{Cov}}
\newcommand{\E}{\mathbb{E}}
\newcommand{\F}{\mathcal{F}}
\newcommand{\R}{\mathbb{R}}
\newcommand{\p}{\mathbb{P}}
\newcommand{\id}{\mathds{1}}
\renewcommand{\(}{\left(}
\renewcommand{\)}{\right)}
\renewcommand{\[}{\left[}
\renewcommand{\]}{\right]}
\renewcommand{\geq}{\geqslant}
\renewcommand{\leq}{\leqslant}
\renewcommand{\epsilon}{\varepsilon}
\renewcommand{\cdots}{\dots}
\theoremstyle{plain}
\newtheorem{theorem}{Theorem}
\newtheorem{lemma}{Lemma}
\newtheorem{proposition}{Proposition}
\theoremstyle{definition}
\newtheorem{definition}{Definition}
\newtheorem{example}{Example}
\theoremstyle{remark}
\newtheorem{remark}{Remark}
\newcommand{\cet}{\begin{center}}
	\newcommand{\ecet}{\end{center}}
\newcounter{saveexample}
\newenvironment{examplec}[1][]{%
	\setcounter{saveexample}{\value{example}}%
	\setcounterref{example}{#1}%
	\addtocounter{example}{-1}%
	\begin{example}[cont.]}{\end{example}%
	\setcounter{example}{\thesaveexample}%
}
\begin{document}
\sloppy 

\title{A Unified Formula of the Optimal Portfolio for Piecewise Hyperbolic Absolute Risk Aversion Utilities}

\author{Zongxia Liang\thanks{\scriptsize Department of Mathematical Sciences, Tsinghua University, Beijing 100084, China. Email: \texttt{liangzongxia@mail.tsinghua.edu.cn}}
	\and Yang Liu\thanks{\scriptsize Corresponding author. 
		Division of Mathematics, School of Science and Engineering, The Chinese University of Hong Kong, Shenzhen, Guangdong 518172, China. Email: \texttt{yangliu16@cuhk.edu.cn}}		
	\and Ming Ma\thanks{\scriptsize Hangzhou Higgs Asset Management Co., Ltd., Zhejiang 310000, China. Email: \texttt{maming@higgsasset.com} }
	\and Rahul Pothi Vinoth\thanks{\scriptsize Department of Economics, UC Berkeley, CA 94720, US. Email: \texttt{rvinoth@berkeley.edu} }
}

\date{}



\maketitle

\begin{abstract}	
	
	
	We propose a general family of piecewise hyperbolic absolute risk aversion (PHARA) utilities, including many classic and non-standard utilities as examples. A typical application is the composition of a HARA preference and a piecewise linear payoff in asset allocation. We derive a unified closed-form formula of the optimal portfolio, which is a four-term division. The formula has clear economic meanings, reflecting the behavior of risk aversion, risk seeking, loss aversion and first-order risk aversion. We conduct a general asymptotic analysis to the optimal portfolio, which directly serves as an analytical tool for financial analysis. We compare this PHARA portfolio with those of other utility families both analytically and numerically. One main finding is that risk-taking behaviors are greatly increased by non-concavity and reduced by non-differentiability of the PHARA utility. Finally, we use financial data to test the performance of the PHARA portfolio in the market.

	
	\vskip 10pt  
	
	\noindent
	{\bf MSC(2010)}: 91B16, 91G10.
	
	\noindent
	{\bf JEL Classifications}: G11, C61.
	\vskip 10pt

	\noindent
	{\bf Keywords:}  Utility theory; 
	Portfolio selection; 
	Non-differentiability; Asymptotic analysis; Empirical study.
\end{abstract}

\section{Introduction}\label{sec:intro}

In the field of continuous-time portfolio selection, after the seminal work of \cite{M1969} investigates hyperbolic absolute risk aversion (HARA) utilities, more and more non-standard (e.g., non-concave, non-differentiable) utility functions appear. 
These utility functions come from the following optimization problems (where $U$ is the utility function, $\boldsymbol{\pi} \in \mathcal{V}$ is the portfolio process to be selected and $X_T$ is the corresponding terminal wealth): 
\begin{enumerate}[(1)]
	\item The decision maker solves a utility maximization problem with the basic HARA utility: 
	\begin{equation}\label{prob:merton_basic}
		\max_{\boldsymbol{\pi} \in \mathcal{V}} \E[U(X_T)], 
	\end{equation}
	where $U$ is a classic HARA utility function (\cite{M1969}). This is revisited in Example \ref{ex:M}.
	
	\item In the field of asset allocation (e.g., the hedge fund),  $\widehat{U}$ is a HARA preference function and $\Theta$ is usually a (non-concave) piecewise linear payoff function. The decision maker solves a utility maximization problem with the objective being a composition function $U := \widehat{U} \circ \Theta$: 
	\begin{equation}
		\max_{\boldsymbol{\pi}  \in \mathcal{V}} \E[U(X_T)] = \max_{\boldsymbol{\pi}  \in \mathcal{V}} \E[\widehat{U}(\Theta(X_T))]. 
	\end{equation}
	We take \cite{C2000} as an example, which is revisited in Example \ref{ex:C}. See also \cite{BKP2004,L2005,HJ2007,BS2014,HLLM2019}. 
	
	\item In the setting of behavioral finance, the decision maker solves a utility maximization problem with a non-concave utility (e.g., S-shaped):
	\begin{equation}
		\max_{\boldsymbol{\pi} \in \mathcal{V}} \E[U(X_T)], 
	\end{equation}
	where $U$ is a S-shaped function or a composition function related to an S-shaped function. In the above category (2), some papers also use a S-shaped utility. These papers may also be included in this category. We take \cite{LSW2017} as an example and revisit it in Example \ref{ex:LSW}. See also \cite{KZ2007,DZ2020,LL2020}.
	
	\item In the setting of risk management, the decision maker solves the utility maximization problem with other objectives and constraints (e.g., the constraints of liquidation, the performance ratio and the Value-at-Risk): 
	\begin{equation}\label{prob:constraint}
		\begin{aligned}
			& \max_{\boldsymbol{\pi} \in \mathcal{V}} \E[U(X_T^{\boldsymbol{\pi}} )],\\
			& \text{s.t. }  \mathcal{R}(X_T) \geq 0,
		\end{aligned}
	\end{equation}
	where $\mathcal{R}$ is a risk constraint. We take \cite{HK2018} as an example and revisit it in Example \ref{ex:HK}. See also \cite{BVY2019,CHN2019,BBK2021,GLX2023}. 
\end{enumerate} 
A specific utility model may be classified in multiple categories above. Although fortunately the problems of these utilities admit closed-form solutions in many cases, the optimal portfolios are expressed in different forms in the literature, making the analytical analysis difficult to conduct consistently. In this paper, we propose a large family of utility, ``piecewise HARA (PHARA) utility", to unify and contain these classic and non-standard utility functions. Intuitively, the PHARA utility is a direct generalization of the HARA utility: it reduces to a HARA utility on each part of the domain. On its whole domain, the utility may not be concave or differentiable. We establish the definition of the PHARA utility in Section \ref{sec:PHARA}. In Examples \ref{ex:M}-\ref{ex:HK} in Section \ref{sec:Examples}, Problems \eqref{prob:merton_basic}-\eqref{prob:constraint} can be represented by a PHARA utility maximization problem \eqref{prob-main}, up to some necessary transformations (e.g., converting to a unconstrained problem); the corresponding contexts and motivations are further demonstrated.




Throughout, we use the standard Black-Scholes model in a complete market, which is introduced in Section \ref{sec:model}. We provide a comprehensive overview on the model setting in Appendix \ref{app:correlated}. 
Our contributions are fourfold. 

First, we provide a unified formula of the optimal portfolio choice for the general PHARA utility (Section \ref{sec:formula}). The portfolio is expressed in a feedback form (Theorem \ref{thm_main}), i.e., an explicit function of the wealth level, given a Lagrangian multiplier ($y^*$ later). 
This unified formula includes optimal portfolios in the preceding literature as typical examples. As a direct formula, our result can be applied to quickly obtain the optimal portfolio in  utility maximization. We show the usage of our formula, how to specify parameters of the PHARA utility and how to directly write down the optimal portfolio, at Examples \ref{ex:M}-\ref{ex:PHARA_complex} in Section \ref{sec:revisit}. 





Second, we illustrate clear economic meanings for the formula (Section \ref{sec:econ}). The unified formula enjoys an advantage of analytical tractability and consists of four terms: \textbf{Merton relative risk aversion term}, \textbf{risk seeking term} (due to non-concavity), \textbf{loss aversion term} (due to benchmark levels) and \textbf{first-order risk aversion term} (due to non-differentiability). 
The division of the closed-form portfolio illustrates the risky behavior of the decision maker. Further, we propose an asymptotic analysis to our portfolio and the corresponding wealth process (Theorem \ref{thm_option table}). It is a generalization of the asymptotic approach in \cite{LL2023}. The main technique is asymptotic analysis and we use a more detailed analysis to the limiting of various terms in Theorem \ref{thm_option table}. One can directly apply Theorem \ref{thm_option table} for asymptotic financial analysis given a specific PHARA utility. As a summary, the optimal portfolio has a pattern of ``multiple-peak-multiple-valley" with the tail trend to the famous \cite{M1969}'s constant risky percentage. We find that not only non-concavity in the utility function causes great risk taking, but also non-differentiability greatly reduces risk taking. The former finding of non-concavity is studied in \cite{C2000} and \cite{KZ2007}, while the latter finding of non-differentiability is new.

Third, we compare the portfolios of the PHARA utility, the HARA utility and the SAHARA utility (symmetric asymptotic HARA; proposed by \cite{CPV2011}) analytically and numerically in Sections \ref{sec:comparison}-\ref{sec:nume_comparison}. We find that all these three corresponding portfolios approach to the Merton percentage when the market state is good. The HARA and PHARA portfolios become a pure risk-free investment when the market state is bad, while the SAHARA portfolio gambles more in a recession state. Different from the HARA portfolio's simple increasing structure with respect to the wealth, the PHARA portfolio has a relatively sophisticated structure: gamble (peak) at the non-concavity and become conservative (valley) at the non-differentiability of the utility.  

Fourth, we adopt the proposed PHARA utility and implement the portfolio using financial data for an empirical study (Section \ref{sec:Real-data study}). We set up our study by estimating our strategy parameters in an eight-year in-sample period and testing our strategy performance in a two-year out-of-sample period. We run a total of 1000 simulations in the out-of-sample period. We find that the proposed PHARA portfolio performs with high returns but high volatility, demonstrating positive alpha and a positive Sharpe Ratio. The majority of the calculated Sharpe Ratio values fall between 0 and 1.0, demonstrating that one of the drawbacks of the PHARA portfolio is high volatility. On the other hand, our simple returns are clustered around 100\% and the mean alpha value is greater than 0. This shows us that the PHARA portfolio can provide an advantage to the market.

In terms of methodology in deriving the optimal portfolio (Theorems \ref{thm_main}-\ref{thm_opt_strategy}), the basic approach is the martingale and duality method and the concavification technique. In a word, by duality arguments and concavification techniques, one can get the optimal terminal wealth, and hence obtain get the optimal wealth process and the optimal portfolio by the martingale representation theorem and It\^{o}'s formula. The martingale and duality method dates back to \cite{P1986}, \cite{KLS1987} and \cite{CH1989}. The concavification technique is established in \cite{C2000} and developed in \cite{KZ2007}, \cite{BS2014}, \cite{LL2020} and so on. This combined approach is frequently adopted in the literature, including \cite{HK2018}, \cite{HLLM2020}, \cite{DZ2020} and other papers above. Throughout, we implement this approach multiple times. We emphasize that our main theoretical novelty is the generality of the PHARA utility family, the delicate structure of the unified formula and the technical details of general asymptotic analysis. For a reference, we provide a basic theory on the concavification (also known as the concave envelope) in Appendix \ref{app:envelope}. 
Finally, Section \ref{sec:conclusion} concludes. All proofs are relegated to Appendix \ref{app:proof}.

\section{Model setting}\label{sec:model}

\subsection{Market model}
In the modern theory of investment, the Black-Scholes model is usually adopted to describe the financial market. Basically, there are a risk-free asset and multiple risky assets in the setting. A risk-free asset (e.g., bond) has a deterministic return rate and a zero volatility, while a risky asset (e.g., stock) has a higher expected return rate and a positive volatility. The key assumptions include that the price process of the risky asset is characterized by a geometric Brownian motion; see Appendix \ref{app:correlated} for a comprehensive overview. 

In this paper, we consider a standard Black-Scholes model in a complete market, consisting of one risk-free asset (i.e., a bond) and $m$ risky assets (i.e., stocks) where $m \geq 1$ is an integer. We introduce the notation as follows. Throughout, the bold symbol means a vector, while the plain symbol means a scalar. Fix the terminal time $T \in (0, \infty)$. The probability space $(\Omega, \mathcal{F}_T, \{\F_t\}_{0 \leq t \leq T}, \mathbb{P})$ describes the financial market. The equipped filtration $ \F = \{\F_t\}_{0 \leq t \leq T}$ is the one generated by an $m$-dimensional standard Brownian motion $\{\mathbf{W}_{t}\}_{0 \leq t \leq T} = \{(W_{1,t}, \cdots, W_{m,t})^\top\}_{0 \leq t \leq T}$ and further augmented by all the $\p$-null sets. 
For the risk-free asset, the interest rate is a constant $r>0$, and thus the price process $\{S_{0,t}\}_{0 \leq t \leq T}$ satisfies
\begin{equation}\label{eq:risk-free}
	\d S_{0,t} = r S_{0,t} \d t,  \;0 \leq t \leq T.
\end{equation}
For $m$ risky assets, we denote a vector of expected return rates by  $\boldsymbol{\mu} = (\mu_1, \cdots, \mu_m)^\top$ and an $m \times m$ matrix of volatility by $\boldsymbol{\sigma} = (\sigma_{i,j})_{1 \leq i, j \leq m}$. For $i = 1,\cdots,m$, the expected return rate of the $i$-th risky asset is $\mu_i>r$. The price process of the $i$-th risky asset $\{S_{i, t}\}_{0 \leq t \leq T}$ is a geometric Brownian motion satisfying the following stochastic differential equation:
\begin{equation}\label{eq:stock}
	\d S_{i, t} = S_{i, t}\left(\mu_i \d t + \sum_{j=1}^m\sigma_{i,j} \d W_{j,t}\right), \;0\leq t< T.
\end{equation}
We assume that there exists some $\epsilon > 0$ such that 
\begin{equation}\label{eq:ass_posi_defi}
	\boldsymbol{\eta}^\top (\boldsymbol{\sigma} \boldsymbol{\sigma}^\top) \boldsymbol{\eta} \geq \epsilon ||\boldsymbol{\eta}||_2^2 \;\; \text{ for any } \boldsymbol{\eta} \in \R^m,
\end{equation}
where $||\boldsymbol{\eta}||_2 := \left(\sum_{i=1}^m \eta_i^2\right)^{\frac{1}{2}}$. Under the assumption \eqref{eq:ass_posi_defi}, the matrices $\boldsymbol{\sigma}^\top$ and $\boldsymbol{\sigma}$ are invertible; see Lemma 2.1 of \cite{KLS1987}. We denote an $m$-dimensional vector by $\mathbf{1}_m = (1, \cdots, 1)^\top$. We further define a vector $\boldsymbol{\theta} := \boldsymbol{\sigma}^{-1} (\boldsymbol{\mu} - r \mathbf{1}_m)$, which actually means the market price of risk. Finally, we define the pricing kernel process $\{\xi_t\}_{0 \leq t \leq T}$ as follows:
\begin{equation}\label{eq:xi}
	\xi_t := \exp\left\{-\(r+  \frac{1}{2}||\boldsymbol{\theta}||_2^2\)t - \boldsymbol{\theta}^\top \mathbf{W}_t\right\}, \;0\leq t \leq T.
\end{equation}
The pricing kernel process is an important quantity in portfolio selection, and it is also known as the stochastic discount factor. The terminal pricing kernel variable $\xi_T$ plays a direct role in solving the terminal utility optimization problem.  

For any $i =1, \cdots, m$, for any time $t \in [0,T]$, we denote by $\pi_{i,t}$ the dollar amount invested in the $i$-th risky asset and denote by $\boldsymbol{\pi}_t = (\pi_{1,t}, \cdots, \pi_{m,t})^\top$ the $m$-dimensional portfolio vector. The process of the asset value $\{X_t\}_{0\leq t \leq T}$ is uniquely determined by the $m$-dimensional portfolio vector process $\{\boldsymbol{\pi}_t\}_{0 \leq t \leq T}$ and an initial value $x_0 \in \R$:
\begin{equation}\label{sde of x}
	\begin{aligned}
		\d X_t &= \sum_{i=1}^{m} \pi_{i,t} \left( \mu_i \d t + \sum_{j=1}^m \sigma_{i,j} \d W_{j,t} \right) + \left(X_t-\sum_{i=1}^{m} \pi_{i, t}\right) r \d t\\
		&= \left( r X_t + \boldsymbol{\pi}_t^\top (\boldsymbol{\mu} - r \mathbf{1}_m)\right) \d t + \boldsymbol{\pi}_t^\top \boldsymbol{\sigma} \d \mathbf{W}_t , \;\;\;\; X_0 = x_0.
	\end{aligned}
\end{equation}
A portfolio $\boldsymbol{\pi} = \{\boldsymbol{\pi}_t\}_{0 \leq t \leq T}$ is called admissible if and only if $\boldsymbol{\pi}$ is an $\{\F_t\}_{0 \leq t \leq T}$-progressively measurable $\R^m$-valued process and  $\int_0^T ||\boldsymbol{\pi}_{t}||_2^2 \d t < \infty$ almost surely. Hence $\boldsymbol{\pi}$ is allowed to do short-selling, i.e., it may happen that $\pi_{i, t} < 0$ for some $i = 1, \cdots, m$ and $t \in [0, T]$.  
The decision maker solves an expected utility optimization problem to conduct portfolio selection: 
\begin{equation}\label{prob-main}
	\begin{aligned}
		& \max_{\boldsymbol{\pi} \in \mathcal{V}} \E[ U(X_T)],
	\end{aligned}
\end{equation}
where $U$ is the utility function and $\mathcal{V}$ is the collection of all the admissible portfolios. 

\subsection{PHARA utility}\label{sec:PHARA}
In our context, the utility function $U$ may not be concave and cause difficulties in the solving procedure. A highly related concept is the concave envelope. Let $\D \subseteq \R$ be a convex set. Denote a continuous function by $U: \D \to \R$, where the domain of $U$ is denoted by $\text{dom } U = \D$. The concave envelope of $U$ (denoted by $U^{**}$) is defined as the smallest 
continuous concave function larger than $U$. That is, for $x \in \D$, 
\begin{equation}\label{eq:conv_dominating}
	U^{**}(x) := \inf\{h(x): h \text{ maps $\D$ to $\R$, $h$ is a concave and continuous function on $\D$ and } h \geq U\}.
\end{equation}
Hence, for a continuous utility function $U$, the concave envelope $U^{**}$ is concave, 
continuous and larger than $U$, and $U^{**}$ is the smallest function satisfying these three conditions; see later Figure \ref{fig} as an example for graphical illustration.
We introduce the basic theory on the non-concave function and its concave envelope in Appendix \ref{app:envelope}, where we provide a procedure of generating the concave envelope from the original function.

The concave envelope plays an important role in portfolio selection problems with non-concave utilities; see Appendix \ref{app:envelope}. Indeed, if $\xi_T$ in \eqref{eq:xi} has a continuous distribution, Problem \eqref{prob-main} has the same optimal solution as the following problem with the utility $U$ replaced by the concave envelope $U^{**}$:
\begin{equation}\label{prob:concavification}
	\max_{\boldsymbol{\pi} \in \mathcal{V}} \E\left[ U^{**}(X_T) \right].
\end{equation}
It is shown in Theorem 2.5 in \cite{BS2014}, Theorem 3.3 in \cite{LL2020} and also the proof of Proposition \ref{Thm_opt_wealth_process} later. 
Thus, we can study $U^{**}$ instead of $U$ and later obtain the same optimal portfolio for both Problems \eqref{prob-main} and \eqref{prob:concavification}.

Next, we define the main object in this paper, piecewise HARA (PHARA) utility. Intuitively, it means being a HARA function on each part of its domain. 

\begin{definition}[PHARA utility]
	\label{def_PHARA}
	Define the function
	$\tilde{U}: \R \to \R$ (with relative risk aversion parameter $R \in [0, \infty]$, benchmark level $A \in \R$, partition point $\hat{x} \in \R$, utility value $u \in \R$, slope $\gamma \in (0, \infty)$, absolute risk aversion parameter $\alpha \in (0, \infty)$) as
	\begin{equation}%
		\label{eq_def_U_tilde}
		\begin{aligned}
			\tilde{U}(x; R, A, \hat{x}, u, \gamma, \alpha) 
			:=
			\left\{
			\begin{aligned}
				& \gamma ( x-\hat{x} ) + u, && \text{ if } R = 0;\\
				& \gamma (\hat{x}-A) \log\left(\frac{x-A}{\hat{x}-A}\right) + u, && \text{ if } R = 1, \hat{x} \neq A;\\
				& \gamma \frac{\hat{x}-A}{1-R} \left( \left(\frac{x-A}{\hat{x}-A}\right)^{1-R}-1 \right) + u, && \text{ if } R \in (0, 1) \cup (1, \infty), \hat{x} \neq A;\\
				& 
				-\frac{\gamma}{\alpha} \left( e^{-\alpha(x-\hat{x})} -1 \right) +u, && \text{ if } R = \infty, A = -\infty, \alpha \in (0, \infty). 
			\end{aligned}
			\right.
		\end{aligned}
	\end{equation}
	Set the domain $\mathcal{D} := [a_0, \infty)$ or $(a_0, \infty)$, where $a_0  := \inf\{a \in \R | U(a) > -\infty\}$. 
	A function $U: \mathcal{D} \to \R$ is a \textbf{piecewise HARA utility} if and only if these exist a partition 
	$\{a_k\}_{k = 0}^{n+1}$ and a family of parameter pairs $\{(R_k, A_k, \alpha_k)\}_{k=0}^n$ (note that it is no need to specify $A_k$ for $k$ satisfying $R_k = 0$, i.e., on the linear part) such that 
	\begin{enumerate}[(i)]
		\item $n \geq 0$, $a_0 < a_1 < \cdots < a_{n+1}$, $a_1, \cdots, a_n \in \R$, $a_{n+1} = \infty$;
		\item $U$ is increasing and continuous on $\mathcal{D}$;
		\item 
		\begin{enumerate}[(a)]
			\item
			If $n = 0$ and $\mathcal{D} = (a_0, \infty)$, or $n =0$ and $\mathcal{D} = [a_0, \infty)$ and $A_0 = a_0$, then 
			$U(x) = \tilde{U}^0(x; R, A, u, \gamma, \alpha)$ for any $x \in (a_0, \infty)$, where 
			\begin{equation}%
				\label{eq_def_U_tilde_0}
				\begin{aligned}
					\tilde{U}^0(x; R, A, u, \gamma, \alpha) 
					:=
					\left\{
					\begin{aligned}
						& \gamma x + u, && \text{ if } R = 0;\\
						& \gamma \log (x-A) + u, && \text{ if } R = 1;\\
						& \frac{\gamma}{1-R}(x-A)^{1-R} + u, && \text{ if } R \in (0, 1) \cup (1, \infty);\\
						& 
						-\frac{\gamma}{\alpha} \left( e^{-\alpha x} -1 \right) +u, && \text{ if } R = \infty, A = -\infty, \alpha \in (0, \infty). 
					\end{aligned}
					\right.
				\end{aligned}
			\end{equation}
			\item 
			If $n = 0$ and $\mathcal{D} = [a_0, \infty)$, then $U(x) = \tilde{U}(x; R_0, A_0, a_0, U(a_0), U'(a_0+), \alpha_0)$ for any $x \in [a_0, \infty)$.
			\item 
			If $n \geq 1$, for $k = 0$, $U(x) = \tilde{U}(x; R_0, A_0, a_1, U(a_1), U'(a_1-), \alpha_0)$ for any $x \in (a_0, a_1)$; 
			for any $k\in\{1, 2, \cdots, n\}$, $U(x) = \tilde{U}(x; R_k, A_k, a_k, U(a_k), U'(a_k+), \alpha_k)$ for any $x \in (a_k, a_{k+1})$.
		\end{enumerate}
		\item For $n$, $R_n \neq 0$ (i.e., $U$ is not linear on the last interval $(a_n, \infty)$). 
	\end{enumerate}
\end{definition}
\begin{remark}
	Technically, Definition \ref{def_PHARA} is classified in categories (a)-(c) because of the different settings under $n = 0$ or $n  \geq 1$ and $\D = [a_0, \infty)$ or $\D = (a_0, \infty)$. In (b)-(c), one could use information of partition points $a_0$, $a_1$, etc., to describe the expression of the utility. In (a), there is no specific partition point with information of utility values and slopes, so one should specify the parameters directly.  
\end{remark}

We show in the following proposition that the concave envelope preserves the property of HARA. 
\begin{proposition}\label{prop:PHARA}
	If $U$ is a PHARA utility function, then $U^{**}$ is also a PHARA utility function. Conversely, if $U^{**}$ is PHARA, $U$ may not be HARA. 
\end{proposition}
Hence, the PHARA utility family has a high level of generality and includes many explicit utility functions, such as (piecewise) power/log/exponential and S-shaped ones. In fact, our result only requires that the concave envelope $U^{**}$ is PHARA. 
This increases the generality of the choice of utility functions, because: (i) $U^{**}$ is PHARA if $U$ is PHARA;  
(ii) $U^{**}$ is possible to be PHARA even if $U$ is not PHARA; (iii) PHARA utilities can be used as building blocks to approximate/characterize an unknown preference. In many cases, the explicit expression of a utility function is unknown, while only information on non-concavity and differentiablity of some intervals is available; see, e.g., \cite{LL2020}. The PHARA utilities can hence serve as basic elements for approximation. Finally, for simplicity, we use the notation: $\gamma_k^+ = U'(a_k+)$, $\gamma_k^- = U'(a_k-)$, where $\gamma_0^- = \infty$ and $\gamma_{n+1}^- = 0$.

\begin{remark}
	There exist some utility functions such that $U(a_0) = -\infty$; e.g., $U(x) = 3 \log (x)$ or $U(x) = -\frac{1}{2} x^{-2}$, $x \in (a_0, a_1)$, where $a_0 = 0$ and $a_1 \in (0, \infty)$. These cases are also included in the PHARA utility family. For these cases, it is only possible to appear on the first interval of the domain ($k = 0$) as $U(a_0) = -\infty$. According to Definition \ref{def_PHARA}, we can write 
	$$
	\begin{aligned}
		&U(x) = \tilde{U}\left(x; R_0 = 1, A_0 = a_0=0, a_1, U(a_1) = 3 \log(a_1), U'(a_1) = 3 a_1^{-1}\right), && \text{ for } U(x) = 3 \log(x), \; x \in (a_0, a_1);\\
		&U(x) = \tilde{U}\left(x; R_0 = 3, A_0 = a_0=0, a_1, U(a_1) = -\frac{1}{2} a_1^{-2}, U'(a_1) = a_1^{-3}\right), && \text{ for } U(x) = -\frac{1}{2} x^{-2}, \; x \in (a_0, a_1).
	\end{aligned}
	$$ 
\end{remark}

\subsection{Examples}\label{sec:Examples}
In this section, we show that plenty of models in the literature, either classic or non-standard, become examples of our PHARA utility in Definition \ref{def_PHARA}. Later in Section \ref{sec:revisit} we establish the parameter tables in terms of a PHARA utility and use a unified formula to write down the optimal portfolio of these examples. 
\begin{example}\label{ex:M}
	In the context of \cite{M1969}, the decision maker (i.e., a single investor) selects the optimal portfolio using the above standard Black-Scholes model with $m$ risky assets and one risk-free asset. As discussed above, the initial wealth for investment is $x_0 > 0$ and the investment period is $T > 0$. We use the same notation in the model setting \eqref{eq:risk-free}-\eqref{prob-main}.
	\begin{itemize}
		\item
		The decision maker has the following CRRA utility with $R > 0$:
		\begin{equation}\label{eq:U:CRRA}
			U(x) = \left\{
			\begin{aligned}
				& \frac{1}{1-R} x^{1-R}, && x \in [0, \infty), &&& \text{ if } R \neq 1;\\
				& \log(x), && x \in (0, \infty), &&& \text{ if } R = 1,
			\end{aligned}
			\right.
		\end{equation}
		Hence, $U$ is PHARA and there is only one part in the domain ($n+1 = 1$); 
		see parameters in Table \ref{tab_CRRA}. In addition, implied by this utility, the condition of nonnegative wealth holds:
		\begin{equation}\label{eq:nonnegative}
			X_t \geq 0, \; \text{ for any } t \in [0, T).     
		\end{equation} 
		
		\item
		The decision maker has the following CRRA utility with $\alpha > 0$:
		\begin{equation}\label{eq:U:CARA}
			U(x) = -\frac{1}{\alpha} e^{- \alpha x}, \; x \in (-\infty, \infty).
		\end{equation}
		Hence, $U$ is PHARA and there is only one part in the domain ($n+1 = 1$); 
		see parameters in Table \ref{tab_CARA}. As the domain is $(-\infty, \infty)$, the wealth at $t$ can be negative and the condition of nonnegative wealth \eqref{eq:nonnegative} need not hold. 
	\end{itemize}
\end{example}

\begin{example}\label{ex:C}
	In the context of \cite{C2000}, a hedge fund manager makes the investment decision. The financial market also follows a multi-asset Black-Scholes model \eqref{eq:risk-free}-\eqref{prob-main} and the condition of nonnegative wealth:
	\begin{equation}
		X_t \geq 0, \; \text{ for any } t \in [0, T).     
	\end{equation} 
	According to the notation of \cite{C2000}, there is an option compensation scheme under which the decision maker (i.e., the manager) has the following payoff function:
	\begin{equation}
		\Theta(x) = \alpha (x - B_T)^{+} + K, \; x \in [0, \infty),
	\end{equation}
	where $B_T := B_0 e^{r T}$ is a riskless benchmark, $B_0$ is a constant, $\alpha > 0$ is the number of options and $K > 0$ is a constant amount of wealth. The option compensation scheme means that the decision maker has his/her own payoff $\Theta(X_T)$ if the terminal fund wealth is $X_T$, where the function $\Theta$ is actually an option.  
	The decision maker has a HARA utility $\widehat{U}(W)$ on his/her own payoff $W$:
	\begin{equation}
		\widehat{U}(W) = \frac{1-\gamma}{\gamma} \left( \frac{A(W-\omega)}{1-\gamma} \right)^\gamma, \; W \in [K, \infty),
	\end{equation}
	where parameters include $\gamma < 1$ (as above, $\widehat{U}$ is $\log$ if $\gamma = 0$), $\omega < K$ and $A > 0$.

	Thus, the decision maker's actual utility is defined by a composition function on the fund wealth $x \geq 0$: 
	\begin{equation}
		U(x) := \widehat{U} \circ \Theta(x)=\left\{
		\begin{aligned}
			&\frac{(1-\gamma)^{1-\gamma}}{\gamma} A^\gamma (K-\omega)^\gamma, && \text{ if } x \in [0, B_T);\\
			&\frac{(1-\gamma)^{1-\gamma}}{\gamma} A^\gamma \alpha^\gamma \left( x - \left(B_T - \frac{K-\omega}{\alpha}\right) \right)^\gamma, &&  \text{ if } x \in [B_T, \infty).
		\end{aligned}
		\right.
	\end{equation}
	The concave envelope of $U$ is denoted by $U^{**}$ and given by
	\begin{equation}\label{eq:C_concavification}
		U^{**}(x)=\left\{
		\begin{aligned}
			& U(0) + U'(\hat{x}) x, &&  \text{ if } x \in [0, \hat{x});\\
			& U(x), &&  \text{ if } x \in [\hat{x}, \infty),
		\end{aligned}
		\right.
	\end{equation}
	where 
	$\hat{x} > 0$ is the solution of
	\begin{eqnarray}
		\frac{U(\hat{x}) - U(0)}{\hat{x}} = U'(\hat{x}).
	\end{eqnarray}
	Hence, both $U$ and $U^{**}$ are PHARA; see parameters of $U^{**}$ in Table \ref{tab_CARA}.
\end{example}

\begin{example}\label{ex:LSW}
	In the context of \cite{LSW2017},  
	the decision maker is an insurer with participating insurance contracts and aims to solve the optimal asset allocation. The market setting is a standard one-dimensional Black-Scholes model \eqref{eq:risk-free}-\eqref{prob-main} with $m=1$. 
	In this example, the decision maker has a behavioral preference (i.e., S-shaped preference):    
	\begin{equation}\label{eq:S-shaped:LSW}
		\widehat{U}(W)=\left\{
		\begin{aligned}
			&W^\gamma,\;\; &&  \text{ if } W \geq 0;\\
			&-\lambda (-W)^\gamma,\;\; &&  \text{ if } W < 0.
		\end{aligned}
		\right.
	\end{equation}
	It means that the decision maker shows risk aversion when exceeding the reference point 0 and loss aversion when falling below. Moreover, there is a so-called participating contract such that according to different levels, the total wealth $X_T$ is shared between the insurer $\Psi(X_T)$ and the policyholder $X_T - \Psi(X_T)$ with different proportions. The contract hence provides a minimal guarantee for the policyholder and an high-reward incentive for the insurer. Mathematically, the contract results in a piecewise linear payoff for the insurer (in the following discussion, $\gamma, \delta, \alpha \in (0, 1)$ and $L_T^g > 0$ are all constants given in \cite{LSW2017}):
	\begin{equation}
		\Psi(X_T)=\left\{
		\begin{aligned}
			& (1-\delta \alpha)X_T - (1-\delta)\LT,\;\; &&  \text{ if } X_T \geq \frac{\LT}{\alpha};\\
			& X_T - \LT,\;\; &&  \text{ if } \LT \leq X_T < \frac{\LT}{\alpha};\\
			& 0, \;\; &&  \text{ if } 0 \leq X_T < \LT,
		\end{aligned}
		\right.
	\end{equation}
	while the policyholder's payoff is $X_T - \Psi(X_T)$. 
	Hence, the optimization problem of the insurer is to maximize the expected utility of the corresponding payoff, which is exactly Problem \eqref{prob-main} with
	the objective function defined as $U := \widehat{U} \circ \Psi$. Specifically, 
	\begin{equation}
		U(x) := \widehat{U}(\Psi(x))=\left\{
		\begin{aligned}
			& (1-\delta \alpha)^\gamma \left(x - \frac{1-\delta}{1-\delta \alpha}\LT\right)^\gamma,\;\; &&  \text{ if } x \geq \frac{\LT}{\alpha};\\
			& (x - \LT)^\gamma,\;\; &&  \text{ if } \LT \leq x < \frac{\LT}{\alpha};\\
			& 0, \;\; &&  \text{ if } 0 \leq x < \LT.
		\end{aligned}
		\right.
	\end{equation}
	It can be checked straightforward that $U$ is PHARA. 
	We define $U^{**}$ as the concave envelope of $U$. In the Case A1 ($1-\gamma > \alpha$), according to the notation of \cite{LSW2017}, we have
	\begin{equation}\label{eq:LSW_concavification}
		U^{**}(x)=\left\{
		\begin{aligned}
			& (1-\delta \alpha)^\gamma \left(x - \frac{1-\delta}{1-\delta \alpha}\LT \right)^\gamma,\;\; && \text{ if } x \geq \frac{\LT}{\alpha};\\
			& (x - \LT)^\gamma,\;\; &&  \text{ if } \frac{\LT}{1-\gamma} \leq x < \frac{\LT}{\alpha};\\
			& 0, \;\; &&  \text{ if } 0 \leq x < \frac{\LT}{1-\gamma},
		\end{aligned}
		\right.
	\end{equation}
	where $\frac{\LT}{1-\gamma}$ is a tangent point. Hence, both $U$ and $U^{**}$ is PHARA; see parameters of $U^{**}$ in Table \ref{tab2}. 
\end{example}

\begin{example}\label{ex:HK}
	In the context of \cite{HK2018}, a hedge fund manager serves as a decision maker. The financial market has a risky asset and a risk-free asset, which follows the Black-Scholes model \eqref{eq:risk-free}-\eqref{prob-main} with $m = 1$. 
	There is a first-loss incentive scheme under which the manager has the payoff:
	\begin{equation}
		\Theta(x)=\left\{
		\begin{aligned}
			&-\omega x_0, &&  \text{ if } x \in [b x_0, (1-\omega)x_0);\\
			&x - x_0, &&  \text{ if } x \in [(1-\omega)x_0, x_0];\\
			&\left( \omega + \alpha (1-\omega) \right) (x - x_0), &&  \text{ if } x \in [x_0, \infty).
		\end{aligned}
		\right.
	\end{equation}
	The decision maker has the payoff $\Theta(X_T)$ for his/her own if the terminal fund wealth is $X_T$.  Similar to \eqref{eq:S-shaped:LSW} in Example \ref{ex:LSW}, the decision maker is assumed to have an S-shaped preference function: 
	\begin{equation}
		\widehat{U}(W)=\left\{
		\begin{aligned}
			&W^p, &&   \text{ if } W \geq 0;\\
			&-\lambda (-W)^q, &&  \text{ if } W \leq 0.
		\end{aligned}
		\right.
	\end{equation}	
	In addition, there is a lower-bound liquidation constraint for the wealth:
	\begin{equation}
		X_T \geq b x_0, \; \text{ a.s.}.     
	\end{equation}  
	Here $b x_0 > 0$ means a threshold that the terminal fund wealth has to exceed in order to avoid liquidation. By multiplying the discount factor $e^{r (t -T)}$, one obtains a liquidation constraint $X_t \geq b x_0 e^{-r(T-t)}$ for any $t \in [0, T]$.    
	As a result, the decision maker's objective is defined by a composition function on the fund wealth $x \geq b x_0$: 
	\begin{equation}
		U(x) := \widehat{U} \circ \Theta(x) = \left\{
		\begin{aligned}
			&-\lambda (\omega x_0)^p, &&  \text{ if } x \in [b x_0, (1-\omega) x_0);\\
			&-\lambda (x_0 - x)^p, &&  \text{ if } x \in [(1-\omega) x_0, x_0);\\
			&\( \omega + \alpha (1-\omega) \)^p (x - x_0)^p, &&  \text{ if } x \in [x_0, \infty).
		\end{aligned}
		\right.
	\end{equation}
	and the concave envelope becomes
	\begin{equation}\label{eq:HK_concavification}
		U^{**}(x) := \left\{
		\begin{aligned}
			& U(a_2) + U'(a_2) (x - a_2), && x \in [b x_0, (1+c^*) x_0);\\
			&\( \omega + \alpha (1-\omega) \)^p (x - x_0)^p, && x \in [(1+c^*) x_0, \infty).
		\end{aligned}
		\right.
	\end{equation}
	where $(1+c^*) x_0 > x_0$ is the solution of
	\begin{equation}
		\frac{U((1+c^*) x_0) - U(b x_0)}{(1+c^*) x_0 - b x_0} = U'((1+c^*) x_0).
	\end{equation}
	Hence, both $U$ and $U^{**}$ are PHARA; see parameters of $U^{**}$ in Table \ref{tab:HK2018}.
\end{example}

\begin{example}\label{ex:PHARA_complex}
	We provide another example with a graph for the PHARA utility family to show its generality. Figure \ref{fig} shows the contour of a general (highly non-concave and non-differentiable) utility. The concave envelope is the smallest concave and continuous function dominating the original utility function, and hence involves a lot of tangent lines and linear connections; see Appendix \ref{app:envelope} for more details. Later we will show that, using the unified formula in Theorem \ref{thm_main}, we can directly compute the closed-form optimal portfolio for this example and know the risk-taking behavior on each part of the utility's domain. This example is revisited in the Section \ref{sec:numerical}. 
	
	\begin{figure}[H]
		\begin{minipage}{0.48\textwidth}
			\centering
			\includegraphics[width=\textwidth]{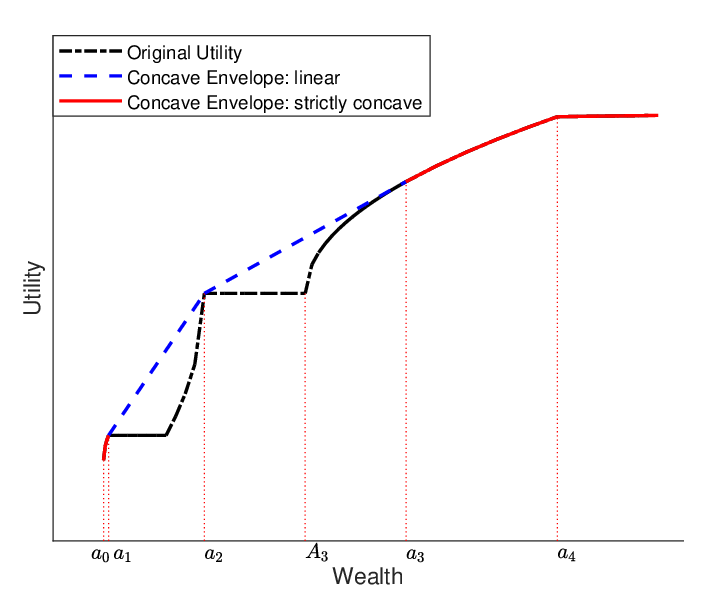}
			\label{0CPT}
		\end{minipage}
		\hspace{1em plus 1fill}
		\begin{minipage}{0.48\textwidth}
			\begin{equation*}
				U(x) =\left\{
				\begin{aligned}
					& k_1  (x - a_0)^{1 - R} - k_1 (a_1- a_0)^{1-R} \\
					& \quad \quad - \lambda (a_2 - a_{1,2})^{q}, \quad\; \text{ if } a_0 \leq  x < a_1;\\
					& -\lambda (a_2 - a_{1,2})^{q}, \quad\quad\quad\; \text{ if } a_1 \leq x  < a_{1,2};\\
					& -\lambda (a_2 - x )^{1-R}, \quad\quad\quad \text{ if } a_{1,2} \leq x < a_2;\\
					& 0, \quad\quad\quad\quad\quad\quad\quad\quad\quad\;\; \text{ if } a_2 \leq x < A_3;\\
					& k_1 (x - A_3)^{1-R}, \quad\quad\quad\;\; \text{ if } A_3 \leq x < a_4;\\
					&  k_2 (x - A_4)^{1-R}+ k_1 A_3^{1-R}  \\
					& \quad \quad - k_2 (a_4 - A_4)^{1-R}, \text{ if } x \geq a_4.
				\end{aligned}
				\right.
			\end{equation*}
		\end{minipage}
		\caption{An example of a PHARA utility function $U$ and its concave envelope $U^{**}$ ($U^{**}$ is also PHARA: $n = 4$, $A_0 = a_0 = 4$, $A_4 = A_3 = 20$, $a_5 = \infty$, non-differentiable points: $a_0 = 4, ~a_1 = 4.4, ~a_2 = 12, ~a_4 = 40$, tangent point: $a_3 = 28$. Other parameters: $a_{1,2} = 8.96$, $k_1 = 0.24^{0.5}$, $k_2 = 0.02$, $\lambda = 1.01$, $R = 0.5$, $q = 0.5$.). Note that if the concave envelope is strictly concave on some interval, the original utility function is the same as its concave envelope on this interval.}
		\label{fig}
	\end{figure}
\end{example}



%
%
%
%
%
%
%

\section{A unified formula of the optimal portfolio}
\label{sec:formula}

We proceed to show the optimal terminal wealth, the wealth process and the portfolio vector of PHARA utilities for Problem \eqref{prob-main}. The basic approach is the martingale method and the concavification technique. 
Above all, Proposition \ref{Thm_opt_wealth_process} proposes a five-term division of the optimal wealth process. The proof is included in Appendix \ref{app:proof}.


\begin{proposition}\label{Thm_opt_wealth_process}
	Suppose that the concave envelope $U^{**}$ has the form in Definition \ref{def_PHARA} and $R_k \in [0, \infty]$ for each $k \in \{0,1,\cdots, n\}$. For Problem \eqref{prob-main},
	
	\noindent
	(1) the optimal terminal wealth is given by
	\begin{equation}
		\label{eq_terminal}
		\begin{aligned}
			X_T^* =  \sum_{k=0}^n \biggr\{&a_k \id_{\left\{y^* \xi_T \in \left(\gamma_k^{+}, \gamma_k^{-}\right)\right\}}\\ 
			+&  \left( A_k + \left(\frac{\gamma_k^{+}}{y^* \xi_T }\right)^{\!R_k^{-1}}\!\!(a_k - A_k) \right) \id_{\left\{y^* \xi_T \in \left(\gamma_{k+1}^{-}, \gamma_k^{+}\right)\right\}} \id_{\{ R_k \neq 0,\infty \}}\\ 
			+& \left(a_k + \frac{1}{\alpha_k} \log\left(\frac{\gamma_k^{+}}{y^* \xi_T}\right)\right) \id_{\left\{y^* \xi_T \in \left(\gamma_{k+1}^{-}, \gamma_k^{+}\right)\right\}} \id_{\left\{ R_k = \infty, A_k = -\infty, \alpha_k > 0 \right\}} \biggr\}, \text{ a.s., }
		\end{aligned}
	\end{equation}
	where $y^*$ is a unique positive number that satisfies 	\begin{equation}\label{equation of nu*}
		\begin{aligned}
			\E[\xi_T X_T^*]  = x_0.
		\end{aligned}
	\end{equation}
	
	\noindent
	(2) The optimal wealth at time $t \in [0,T)$ is given by 
	\begin{equation}\label{eq:opt_process}
		\begin{aligned}
			X_t^* &:= X^D_{t} + X^A_{t} + X^{\bar{A}}_{t} + X^R_{t} + X^{\bar{R}}_{t}\\
			&= \sum_{k=0}^n \left( X^D_{t,k} + X^A_{t,k} + X^{\bar{A}}_{t,k} + X^R_{t,k} + X^{\bar{R}}_{t,k} \right),
		\end{aligned}
	\end{equation}
	where the terms are given by
	\begin{equation}
		\label{eq:wealth_five_term}
		\begin{aligned}
			&X^D_{t,k} := e^{-r(T-t)} a_k  \left[ \Phi\left(d_1\left(\frac{\gamma^+_k}{y^*\xi_t}\right)\right) -\Phi\left(d_1\left(\frac{\gamma^-_k}{y^* \xi_t}\right)\right) \right], \\
			&X^A_{t,k} :=  e^{-r(T-t)} A_k \left[ \Phi\left(d_1\left(\frac{\gamma^-_{k+1}}{y^*\xi_t}\right)\right) -\Phi\left(d_1\left(\frac{\gamma^+_k}{y^*\xi_t}\right)\right) \right] \id_{\left\{ R_k \neq 0,\infty \right\}},\\
			&X^{\bar{A}}_{t,k} := e^{-r(T-t)} \left[ a_k+\frac{-||\boldsymbol{\theta}||_2 \sqrt{T-t}}{\alpha_k} d_1\left(\frac{\gamma_k^{+}}{y^* \xi_t}\right)\right]\left[\Phi \left(d_1\left(\frac{\gamma_{k+1}^{-}}{y^*\xi_t}\right)\right)-\Phi\left(d_1\left(\frac{\gamma_k^{+}}{y^*\xi_t}\right)\right)\right] \id_{ \left\{ R_k = \infty, A_k = -\infty, \alpha_k > 0 \right\} }, \\ 
			&X^R_{t,k} :=  e^{-r(T-t)} (a_k-A_k) \frac{\Phi'\left(d_1\left(\frac{\gamma^+_k}{y^*\xi_t}\right)\right)}{\Phi'\left(d^k\left(\frac{\gamma^+_k}{y^*\xi_t}\right)\right)}\left[ \Phi\left(d^k\left(\frac{\gamma^-_{k+1}}{y^*\xi_t}\right)\right) -\Phi\left(d^k\left(\frac{\gamma^+_k}{y^*\xi_t}\right)\right) \right] \id_{\left\{ R_k \neq 0,\infty \right\}},\\
			&X^{\bar{R}}_{t,k} := e^{-r(T-t)} \frac{-||\boldsymbol{\theta}||_2 \sqrt{T-t}}{\alpha_k} \left[\Phi' \left(d_1\left(\frac{\gamma_{k+1}^{-}}{y^*\xi_t}\right)\right)-\Phi'\left(d_1\left(\frac{\gamma_k^{+}}{y^*\xi_t}\right)\right)\right] \id_{ \left\{ R_k = \infty, A_k = -\infty, \alpha_k > 0  \right\} }, 
		\end{aligned}
	\end{equation}
	and a transformation function is denoted by $d(z, h) := \frac{1}{-||\boldsymbol{\theta}||_2 \sqrt{T-t}}\left(\log(z) +\left(r+\frac{||\boldsymbol{\theta}||_2^2}{2}\right)(T-t)\right)+ h ||\boldsymbol{\theta}||_2 \sqrt{T-t}$. We further set $d_0(z)  := d(z, 0),~d_1(z) := d(z, 1), ~d^{k}(z) := d(z, 1 - R_k^{-1}), ~k= 0, \cdots, n.$
\end{proposition}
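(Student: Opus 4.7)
The plan is to combine the martingale method with concavification. Since the market is complete, I would first convert the dynamic problem \eqref{prob-main} into the static problem of maximizing $\E[U(X_T)]$ over $\F_T$-measurable $X_T$ subject to the budget constraint $\E[\xi_T X_T] = x_0$. Because $U^{**}$ is PHARA (hence concave and upper semicontinuous) and dominates $U$, and because $\xi_T$ is lognormal so that $y^*\xi_T$ puts no mass at any slope of $U^{**}$, the concavification result cited in the excerpt (\cite{BS2014}, \cite{LL2020}) implies that the optimizer of the concavified problem with $U^{**}$ in place of $U$ is also an optimizer of \eqref{prob-main}. It therefore suffices to solve the concave static problem.

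For the concave static problem, the Lagrangian yields that $X_T^*(\omega)$ is a pointwise maximizer of $x \mapsto U^{**}(x) - y^*\xi_T(\omega)\,x$, where $y^*>0$ is the unique multiplier determined by the budget constraint \eqref{equation of nu*}; uniqueness follows from strict monotonicity of $y \mapsto \E[\xi_T X_T^*(y)]$ in $y$. I would identify the pointwise maximizer piece by piece: on a strictly concave power piece with $R_k \in (0,1)\cup(1,\infty)$, inverting the marginal utility yields $x = A_k + (a_k-A_k)\bigl(\gamma_k^+/(y^*\xi_T)\bigr)^{1/R_k}$ on $\{y^*\xi_T \in (\gamma_{k+1}^-, \gamma_k^+)\}$; on an exponential piece ($R_k = \infty$, $A_k=-\infty$), inversion yields $x = a_k + \alpha_k^{-1}\log(\gamma_k^+/(y^*\xi_T))$ on the same type of event; and at a kink $a_k$, where the marginal of $U^{**}$ jumps from $\gamma_k^-$ down to $\gamma_k^+$, the pointwise optimum sticks at $a_k$ when $y^*\xi_T \in (\gamma_k^+, \gamma_k^-)$. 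Assembling these cases over $k=0,\dots,n$ produces \eqref{eq_terminal}.

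For the wealth process I would invoke the martingale identity $X_t^* = \xi_t^{-1}\E[\xi_T X_T^* \mid \F_t]$, valid in a complete market for the replicating strategy of $X_T^*$. Substituting the explicit form of $X_T^*$ splits the conditional expectation into five groups matching the five indicators in \eqref{eq:wealth_five_term}. Conditional on $\F_t$, $\log(\xi_T/\xi_t)$ is Gaussian with mean $-(r + ||\boldsymbol{\theta}||_2^2/2)(T-t)$ and variance $||\boldsymbol{\theta}||_2^2(T-t)$, so any event $\{y^*\xi_T \in (c_1,c_2)\}$ reduces to a condition on a standard normal encoded by $d_0$. Each required integral $\E\bigl[\xi_T^{\beta}\, \id_{\{y^*\xi_T \in (c_1,c_2)\}} \bigm| \F_t\bigr]$ is then computed by completing the square in the Gaussian exponent; the resulting translation of the Gaussian argument by $\beta\,||\boldsymbol{\theta}||_2\sqrt{T-t}$ produces $d_1$ (for $\beta=1$, weights carrying one $\xi_T$) and $d_{k+1}$ (for $\beta=1-R_k^{-1}$, power weights $\xi_T^{1-1/R_k}$). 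The main obstacle I anticipate is bookkeeping: handling the four HARA subcases uniformly together with their limiting forms ($R_k=1$ logarithmic and $R_k=\infty$ exponential), aligning indicator events at kinks and tangency points of $U^{**}$, and verifying the algebraic simplification that collapses the Gaussian prefactor of the $X^R_{t,k}$ term into the compact ratio $\Phi'(d_1(\cdot))/\Phi'(d_{k+1}(\cdot))$ appearing in \eqref{eq:wealth_five_term}.
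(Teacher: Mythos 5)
Your proposal is correct and follows essentially the same route as the paper: static reformulation via market completeness, pointwise maximization of $x \mapsto U^{**}(x) - y^*\xi_T x$ with the multivalued/boundary cases dismissed because $\xi_T$ has a continuous distribution, and then the martingale identity $X_t^* = \xi_t^{-1}\E[\xi_T X_T^* \mid \F_t]$ evaluated through Gaussian integrals whose completed squares produce the shifts $d_1$ and $d_{k+1}$. The only cosmetic difference is that the paper proves the equivalence $\E[U(X_T^*)] = \E[U^{**}(X_T^*)]$ in-line via the null-set argument rather than citing it, and it maximizes $U(x)-y^*\xi_T x$ directly instead of concavifying first; these are interchangeable here.
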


The optimal terminal wealth \eqref{eq_terminal} contains three terms, while the optimal wealth process \eqref{eq:opt_process} contains five terms. The term $X^D_{t, k}$ comes from the non-differentiable point $a_k$ of the utility function (the left- and right- derivatives are not equal: $\gamma_{k}^{+} \neq \gamma_{k}^{-}$). We call $X^D_{t}$ the \textbf{first-order risk aversion term}, and later in Theorem \ref{thm_main} it leads to the term $\boldsymbol{\pi}_t^{(4)}$ with a same name in the optimal portfolio \eqref{eq:main}. The name is because compared to \cite{P1964}'s classic ``second-order" risk aversion caused by concavity, the non-differentiability causes ``first-order" risk aversion; see \cite{SS1997}.  
We call the terms $X_{t}^{A}$ and $X_{t}^{\bar{A}}$ the \textbf{loss aversion terms}, as they originate from a loss-averse benchmark (e.g., $A_k$). They lead to the term $\boldsymbol{\pi}_t^{(3)}$ with a same name in the portfolio. We call the terms $X_{t}^{R}$ and $X_{t}^{\bar{R}}$ the \textbf{risk seeking terms} and they lead to the term $\boldsymbol{\pi}_t^{(2)}$ with a same name in the portfolio. 
These names are explained  in detailed after Theorem \ref{thm_main}.  
Lastly, the terms $X_{t, k}^{A}$ and $X_{t, k}^{R}$ come from a CRRA utility on the part $(a_k, a_{k+1})$, while $X_{t, k}^{\bar{A}}$ and $X_{t, k}^{\bar{R}}$ come from a CARA utility on the part.

Now we present the unified formula of the optimal portfolio of PHARA utilities for Problem \eqref{prob-main}. The proof is included in Appendix \ref{app:proof}. Basically, we apply the martingale method on the optimal wealth $X_t^*$ of Proposition \ref{Thm_opt_wealth_process} and hence obtain the optimal portfolio vector. 
\begin{theorem}\label{thm_main}
	Suppose that the concave envelope $U^{**}$ has the form in Definition \ref{def_PHARA} 
	and $R_k \in \{ R, 0 \}$ for each $k \in \{0,1,\cdots, n\}$ (where $R \in (0, \infty)$). For Problem \eqref{prob-main}, the optimal portfolio vector at time $t \in [0, T)$ is
	\begin{equation}\label{eq:main}
		\begin{aligned}
			\boldsymbol{\pi}^*_t &= \underbrace{\frac{(\boldsymbol{\sigma}^\top)^{-1} \boldsymbol{\theta}}{R} X^*_t}_{\text{Merton term}} + \underbrace{\frac{e^{-r(T-t)}}{\sqrt{T-t}} \frac{(\boldsymbol{\sigma}^\top)^{-1} \boldsymbol{\theta}}{||\boldsymbol{\theta}||_2} \sum_{k=0}^{n} (a_{k+1} - a_k) \Phi'\(d_1\(\frac{\gamma_k^{+}}{y^* \xi_t}\)\) \id_{\{R_k = 0 \}}}_{\text{risk seeking}}\\
			&\quad \quad \quad \underbrace{-e^{-r(T-t)}\frac{(\boldsymbol{\sigma}^\top)^{-1} \boldsymbol{\theta}}{R} \sum_{k=0}^{n} A_k q_k \id_{\{R_k \neq 0 \}}}_{\text{loss aversion}}  \underbrace{-e^{-r(T-t)}\frac{(\boldsymbol{\sigma}^\top)^{-1} \boldsymbol{\theta}}{R} \sum_{k=0}^{n} a_k p_k }_{\text{first-order risk aversion}}\\
			&:= \boldsymbol{\pi}_t^{(1)} + \boldsymbol{\pi}_t^{(2)} + \boldsymbol{\pi}_t^{(3)} + \boldsymbol{\pi}_t^{(4)},
		\end{aligned}
	\end{equation}
	where $X_t^*$ is the optimal wealth amount at time $t$, $y^*$ is a unique positive number that satisfies \eqref{equation of nu*},  
	$\Phi(\cdot)$ is the standard normal cumulative distribution function and we define
	\begin{equation}\label{eq:d1}
		d_1(z) := \frac{1}{-||\boldsymbol{\theta}||_2 \sqrt{T-t}}\left(\log(z) +\left(r-\frac{||\boldsymbol{\theta}||_2^2}{2}\right)(T-t)\right), \;\; z > 0,
	\end{equation}		
	\begin{equation}\label{eq:pkqk}
		p_k := \Phi\left(d_1\left(\frac{\gamma_k^{+}}{y^* \xi_t}\right)\right) - \Phi\left(d_1\left(\frac{\gamma_k^{-}}{y^* \xi_t}\right)\right),\;\; q_k :=\Phi\left(d_1\left(\frac{\gamma_{k+1}^{-}}{y^* \xi_t}\right)\right) - \Phi\left(d_1\left(\frac{\gamma_{k}^{+}}{y^* \xi_t}\right)\right), \;\; k = 0, 1, \cdots, n. 
	\end{equation}
\end{theorem}

\begin{remark}
	\begin{enumerate}[(i)]
		\item As $\sum_{k=0}^n p_k + \sum_{k=0}^n q_k = 1$, $p_k$ and $q_k$ are interpreted to be probabilities.
		\item By definition and computation, we have $q_k = 0$ if $R_k = 0$; in this case, there is no need to specify $A_k$. 
		\item We have $p_k = 0$ if $a_k$ is differentiable. 
	\end{enumerate}
\end{remark}

Under a general PHARA utility, the optimal portfolio is expressed into a unified formula consisting of four terms. The first term $\boldsymbol{\pi}^{(1)}$ is called the \textbf{Merton relative risk aversion term (RA)}, i.e., the classic constant percentage strategy in \cite{M1969}. 
The second term $\boldsymbol{\pi}^{(2)}$ is called the \textbf{risk seeking term (RS)}, which only appears on the linear part of the concave envelope ($R_k = 0$). As \cite{C2000} shows, the non-concave parts in the utility (or, the linear parts in the concave envelope) lead to a strong increase in the investment of the risky asset. The risk seeking term acts as the increase of the risky investment. Such an increase $\boldsymbol{\pi}^{(2)}$ is proportional to the length ($a_{k+1}- a_k$) of the linear parts in the concave envelope. The optimal percentage in the risky asset exceeds 100\% when time $t$ approaches the terminal time $T$ and the wealth amount lies in the linear parts. In Figure \ref{fig}, it happens when $X_t^* \in (a_1, a_2) \cup (a_2, a_3)$ and $t \rightarrow T$.  

The third term $\boldsymbol{\pi}^{(3)}$ is called the \textbf{loss aversion term (LA)}. There are some benchmarks in the utility, e.g., a wealth level $A \in \R$ for the HARA utility $U(x) = \frac{1}{1-R} (x-A)^{1-R}$ and a reference point for the S-shaped utility $U(x) = \frac{1}{1-R} (x-A)^{1-R} \id_{\{x \geq A\}} + (-\lambda) \frac{1}{1-R} (A-x)^{1-R} \id_{\{x < A\}}$. 
These benchmarks are regarded as deposit wealth and induce a decrease of the risky investment to avoid loss. $\boldsymbol{\pi}^{(3)}$ is actually a weighted sum of all discounted benchmarks (in Figure \ref{fig}, they are $A_0, A_3, A_4$). The weight is given by $q_k$.

The fourth term $\boldsymbol{\pi}^{(4)}$ is called the \textbf{first-order risk aversion term (First-order RA)}. \cite{P1964} and \cite{SS1997} discuss the different orders of risk aversion. \cite{SS1997} show that a non-differentiable point leads to a ``first-order" risk premium. Here we find that non-differentiability also causes a decrease $\boldsymbol{\pi}^{(4)}$ in the risky position. $\boldsymbol{\pi}^{(4)}$ is actually a weighted sum of all discounted non-differentiable points (in Figure \ref{fig}, they are $a_0, a_1, a_2, a_4$). The weight is given by $p_k$ (if $a_k$ is differentiable, $p_k = 0$). As a comparison, the Merton term $\boldsymbol{\pi}^{(1)}$ is actually caused by ``second-order" risk aversion; see \cite{P1964} for discussion on risk premium. Later in Section \ref{sec:numerical} we will find the decrease effect of $\boldsymbol{\pi}^{(4)}$ is much greater than that of $\boldsymbol{\pi}^{(3)}$.

Finally, we show a formula of the optimal portfolio for general case ($R_k \in [0, \infty]$) in Theorem \ref{thm_opt_strategy}.
\begin{theorem}[General case]
	\label{thm_opt_strategy}
	Suppose that the concave envelope $U^{**}$ has the form in Definition \ref{def_PHARA} and $R_k \in [0, \infty]$ for each $k \in \{0,1,\cdots, n\}$. For Problem \eqref{prob-main}, the optimal portfolio vector at time $t \in [0,T)$ is given by
	\begin{equation}\label{eq_general_strategy}
		\begin{aligned}
			\boldsymbol{\pi}^*_t &= (\boldsymbol{\sigma}^\top)^{-1}\boldsymbol{\theta} \sum_{k=0}^n \Bigg\{ R_k^{-1} X^R_{t,k} \id_{\{ R_k \neq 0,\infty \}} + e^{-r(T-t)}\frac{a_{k+1}-a_k}{||\boldsymbol{\theta}||_2\sqrt{T-t}}\Phi'\(d_1\(\frac{\gamma_{k}^{+}}{y^*\xi_t}\)\) \id_{\{ R_k = 0 \}} \\
			&\quad \quad \quad \quad \quad+ e^{-r(T-t)}\frac{1}{\alpha_k} \[\Phi\(d_1\(\frac{\gamma_{k+1}^{-}}{y^*\xi_t}\)\) - \Phi\(d_1\(\frac{\gamma_k^{+}}{y^*\xi_t}\)\)\] \id_{\{ R_k = \infty, A_k = -\infty, \alpha_k > 0 \}} \Bigg\}.
		\end{aligned}
	\end{equation}
\end{theorem}
This theorem shows a formula \eqref{thm_opt_strategy} of the optimal portfolio for the general case ($R_k \in [0, \infty]$), while $\boldsymbol{\pi}^*_t$ relies on $X^R_{t,k}$ defined in \eqref{eq:wealth_five_term}. Here $X^R_{t,k}$ is a part of $X_t^*$. Hence, for the more general case, this portfolio \eqref{eq_general_strategy} is not a feedback form of the wealth $X_t^*$.

\section{Examples revisiting}\label{sec:revisit}
The unified formula \eqref{eq:main} includes various closed-form portfolios in the literature as examples. We illustrate how to use the formula to directly write down the optimal portfolio by Examples \ref{ex:M}-\ref{ex:HK}. 

\begin{examplec}[ex:M]	
	Now we apply our formula \eqref{eq:main} to solve Problem \eqref{prob-main} in this example.
	\begin{itemize}
		\item 
		According to the expression of the CRRA utility $U$ in  \eqref{eq:U:CRRA}, we directly have the table of parameters: 
		\begin{table}[htbp]\centering
			\begin{tabular}{c|c|c|c|c|c|c|c}
				& $a_k$ & $\gamma_k^-$ & $\gamma_k^+$ & $R_k$ & $A_k$ & $q_k$ & $p_k$        \\
				\hline 
				$k=0$ & $0$   &  $\infty$    & $\infty$     & $R$ & $0$   &  $1$  & $0$   \\
				\hline
				$k=1$ & $\infty$ & $0$       & NA           & NA    & NA    & NA    & NA \\
				\hline
			\end{tabular}
			\captionof{table}{Parameter specification for the PHARA utility function \eqref{eq_def_U_tilde} and Theorem \ref{thm_main} ($U$ is PHARA: $n+1=1$) according to the notation of the CRRA utility in \cite{M1969}.}.\label{tab_CRRA}
		\end{table}
		
		Thus, according to \eqref{eq:main} in Theorem \ref{thm_main}, the optimal portfolio of Problem \eqref{prob-main} is given by
		\begin{equation}\label{eq:CRRA_strategy}
			\begin{aligned}
				\boldsymbol{\pi}_t^* = & (\boldsymbol{\sigma}^\top)^{-1} \boldsymbol{\theta} \frac{X_t^*}{R}, \; t \in [0, T],
			\end{aligned}
		\end{equation}
		which coincides with Equation (43) in \cite{M1969}. It is the well-known Merton portfolio with the percentage $(\boldsymbol{\sigma}^\top)^{-1} \boldsymbol{\theta} \frac{1}{R}$. We will discuss other portfolios with the Merton portfolio in Section \ref{sec:econ}.
		
		\item 
		According to the expression of the CARA utility $U$ in  \eqref{eq:U:CARA}, we directly have the table of parameters:
		\begin{table}[htbp]\centering
			\begin{tabular}{c|c|c|c|c|c|c|c|c}
				& $a_k$ & $\gamma_k^-$ & $\gamma_k^+$ & $\alpha_k$ & $R_k$ & $A_k$ & $q_k$ & $p_k$\\
				\hline 
				$k=0$ & $-\infty$ & $\infty$ & $\infty$ & $\alpha$ & $\infty$ & $-\infty$ & $1$ & $0$ \\
				\hline
				$k=1$ & $\infty$ & $0$ & NA & NA & NA & NA & NA & NA \\
				\hline
			\end{tabular}
			\captionof{table}{Parameter specification for the PHARA utility function \eqref{eq_def_U_tilde} and Theorem \ref{thm_main} ($U$ is PHARA: $n+1=1$) according to the notation  of the CARA utility in \cite{M1969}.}.\label{tab_CARA}
		\end{table}
		
		Thus, according to \eqref{eq_general_strategy} in Theorem \ref{thm_opt_strategy}, the optimal portfolio of Problem \eqref{prob-main} is given by
		\begin{equation}
			\begin{aligned}
				\boldsymbol{\pi}_t^* = & (\boldsymbol{\sigma}^{\top})^{-1} \boldsymbol{\theta}  \frac{e^{-r(T-t)}}{\alpha}, \; t \in [0, T).
			\end{aligned}
		\end{equation}
		which coincides with Equation (65) in \cite{M1969}.
	\end{itemize}
\end{examplec}

\begin{examplec}[ex:C]
	Now we apply our formula \eqref{eq:main} to solve Problem \eqref{prob-main} in this example. According to the expression of $U^{**}$ in  \eqref{eq:C_concavification}, we directly have the table of parameters:
	\begin{table}[htbp]\centering
		\begin{tabular}{c|c|c|c|c|c|c|c}
			& $a_k$ & $\gamma_k^-$ & $\gamma_k^+$ & $R_k$ & $A_k$ & $q_k$ & $p_k$        \\
			\hline 
			$k=0$ & $0$ &  $\infty$ & $U'(\hat{x})$ &  $0$   & NA  & $0$   & $1-\Phi\left(d_1\(\frac{U'(\hat{x})}{y^* \xi_t}\)\right)$  \\
			\hline
			$k=1$ & $\hat{x}$ & $U'(\hat{x})$ & $U'(\hat{x})$ & $1-\gamma$ & $B_T - \frac{K-\omega}{\alpha}$  & $\Phi\left(d_1\(\frac{U'(\hat{x})}{y^* \xi_t}\)\right)$   &  $0$ \\
			\hline
			$k=2$ & $\infty$ & $0$ & NA & NA & NA & NA & NA \\
			\hline
		\end{tabular}
		\captionof{table}{Parameter specification for the PHARA utility function \eqref{eq_def_U_tilde} and Theorem \ref{thm_main} ($U^{**}$ is PHARA: $n+1=2$, non-differentiable points: $a_0$, tangent point: $a_1$) according to the notation in \cite{C2000}. }.\label{tab:C2000}
	\end{table}
	
	\noindent
	Thus, according to \eqref{eq:main} in Theorem \ref{thm_main}, we directly have the optimal portfolio for both Problems \eqref{prob-main} and \eqref{prob:concavification}:
	\begin{equation}\label{eq:C_strategy}
		\begin{aligned}
			\boldsymbol{\pi}_t^* = & \underbrace{\frac{(\boldsymbol{\sigma}^\top)^{-1} \boldsymbol{\theta}}{1-\gamma} X_t^*}_{\text{Merton term}} +  \underbrace{\frac{e^{-r(T-t)}}{ \sqrt{T-t}} \frac{(\boldsymbol{\sigma}^\top)^{-1} \boldsymbol{\theta}}{||\boldsymbol{\theta}||_2} (a_1-a_0) \Phi'\(d_1\(\frac{\gamma_0^+}{y^* \xi_t}\)\)}_{\text{risk seeking}} - \underbrace{\frac{e^{-r(T-t)}(\boldsymbol{\sigma}^\top)^{-1} \boldsymbol{\theta}}{1-\gamma}  A_1 q_1}_{\text{loss aversion}}- \underbrace{\frac{e^{-r(T-t)}(\boldsymbol{\sigma}^\top)^{-1} \boldsymbol{\theta} }{1-\gamma} a_0 p_0,}_{\text{first-order risk aversion}}
		\end{aligned}
	\end{equation}
	where we use the market parameters of our current paper (Section \ref{sec:model}) as notation. We can see $\boldsymbol{\pi}^*_t$ in \eqref{eq:C_strategy} is exactly the optimal portfolio given by Equation (25) in \cite{C2000}.
\end{examplec}

\begin{examplec}[ex:LSW]
	Now we apply our formula \eqref{eq:main} to solve Problem \eqref{prob-main}. 
	%
	According to the expression of $U^{**}$ in  \eqref{eq:LSW_concavification}, we directly have the table of parameters in Table \ref{tab2}.  
	\begin{table}[htbp]\centering
		\scriptsize
		\begin{tabular}{c|c|c|c|c|c|c|c}
			& $a_k$ & $\gamma_k^-$ & $\gamma_k^+$ & $R_k$ & $A_k$ & $q_k$ & $p_k$        \\
			\hline 
			$k=0$ & $0$ &  $\infty$ & $K$ &  $0$   & NA  & $0$   & $\Phi\left(d_1\left(\frac{K}{y^* \xi_t}\right)\right)$  \\
			\hline
			$k=1$ & $\frac{\LT}{1-\gamma}$ & $K$ & $K$ & $1-\gamma$ & $L$   & $\Phi\left(d_1\left(\frac{\tilde{m}}{y^* \xi_t}\right)\right) - \Phi\left(d_1\left(\frac{K}{y^* \xi_t}\right)\right)$   &  $0$ \\
			\hline
			$k=2$ & $\frac{\LT}{\alpha}$ & $\tilde{m}$ & $(1-\delta \alpha)\tilde{m}$ & $1-\gamma$ & $\frac{1-\delta}{1-\delta \alpha} L$   & $1 - \Phi\left(d_1\left(\frac{(1-\delta \alpha)\tilde{m}}{y^* \xi_t}\right)\right)$  & $\Phi\left(d_1\left(\frac{(1-\delta \alpha)\tilde{m}}{y^* \xi_t}\right)\right) -\Phi\left(d_1\left(\frac{\tilde{m}}{y^* \xi_t}\right)\right)$\\
			\hline
			$k=3$ & $\infty$ & $0$ & NA & NA & NA   & NA  & NA \\
			\hline
		\end{tabular}
		\captionof{table}{Parameter specification for the PHARA utility function \eqref{eq_def_U_tilde} and Theorem \ref{thm_main} ($U^{**}$ is PHARA: $n=2$, non-differentiable points: $a_0$, $a_2$, tangent point: $a_1$). According to the notation in \cite{LSW2017}, we define $K:= \gamma \left( \LT/(1-\gamma) - \LT \right)^{\gamma - 1}$ and $\tilde{m} := \gamma \left( \LT/\alpha - \LT \right)^{\gamma-1}$. }.\label{tab2}
	\end{table}

	Thus, according to Theorem \ref{thm_main}, we directly have the optimal portfolio for both Problems \eqref{prob-main} and \eqref{prob:concavification}:
	\begin{equation}\label{eq:LSW_strategy}
		\boldsymbol{\pi}_t^* = \underbrace{\frac{\theta}{\sigma (1-\gamma)} X_t^*}_{\text{Merton term}} + \underbrace{\frac{e^{-r(T-t)}}{\sigma \sqrt{T-t}} (a_1 - a_0) \Phi'\left( d_1\left( \frac{\gamma_0^+}{y^* \xi_t} \right) \right)}_{\text{risk seeking}} - \underbrace{\frac{\theta e^{-r (T-t)}}{\sigma (1-\gamma)} \left( A_1 q_1  + A_2 q_2 \right)}_{\text{loss aversion}} - \underbrace{\frac{\theta e^{-r (T-t)}}{\sigma (1-\gamma)} (a_0 p_0 + a_2 p_2),}_{\text{first-order risk aversion}}
	\end{equation}
	where we use the market parameters of our current paper (Section \ref{sec:model}) as notation. 
	
	It is checked by tedious computation that $\boldsymbol{\pi}^*_t$ in \eqref{eq:LSW_strategy} is exactly the same as the optimal portfolio given by (34) in \cite{LSW2017}, where they use a totally different expression. Compared to their portfolio (34), the portfolio \eqref{eq:LSW_strategy} is directly given by our formula without complicated derivation and has a four-term division with clear economic meanings. 
\end{examplec}

\begin{examplec}[ex:HK]	
	Now we apply our formula \eqref{eq:main} to solve Problem \eqref{prob-main}. 
	According to the expression of $U^{**}$ in  \eqref{eq:HK_concavification}, we directly have the table of parameters: 
	\begin{table}[htbp]\centering
		\scriptsize
		\begin{tabular}{c|c|c|c|c|c|c|c}
			& $a_k$ & $\gamma_k^-$ & $\gamma_k^+$ & $R_k$ & $A_k$ & $q_k$ & $p_k$        \\
			\hline 
			$k=0$ & $b x_0$ &  $\infty$ & $U'((1+c^*) x_0)$ &  $0$   & NA  & $0$   & $\Phi\left(d_1\left(\frac{U'((1+c^*) x_0)}{y^* \xi_t}\right)\right)$  \\
			\hline
			$k=1$ & $(1+c^*) x_0$ & $U'((1+c^*) x_0)$ & $U'((1+c^*) x_0)$ & $1-p$ & $x_0$   & $1 - \Phi\left(d_1\left(\frac{U'((1+c^*) x_0)}{y^* \xi_t}\right)\right)$   &  $0$ \\
			\hline
			$k=2$ & $\infty$ & $0$ & NA & NA & NA   & NA  & NA \\
			\hline
		\end{tabular}
		\captionof{table}{Parameter specification for the PHARA utility function \eqref{eq:HK_concavification} and Theorem \ref{thm_main} ($U^{**}$ is PHARA: $n + 1 = 2$, non-differentiable points: $a_0$, tangent point: $a_1$) according to the notation in \cite{HK2018}.}.\label{tab:HK2018}
	\end{table}
	
	\noindent
	Thus, according to \eqref{eq:main} in Theorem \ref{thm_main}, we directly have the optimal portfolio for both Problems \eqref{prob-main} and \eqref{prob:concavification}:
	\begin{equation}\label{eq:HK_strategy}
		\begin{aligned}
			\pi_t^* = & \underbrace{\frac{\theta}{\sigma (1-p)} X_t^*}_{\text{Merton term}}  +  \underbrace{\frac{e^{-r(T-t)}}{\sigma \sqrt{T-t}} (a_1 - a_0) \Phi'\left(d_1\left(\frac{\gamma_0^+}{y^* \xi_t}\right)\right)}_{\text{risk seeking}} - \underbrace{\frac{\theta e^{-r(T-t)}}{\sigma(1-p)} A_1 q_1}_{\text{loss aversion}}- \underbrace{\frac{\theta e^{-r(T-t)}}{\sigma(1-p)} a_0 p_0,}_{\text{first-order risk aversion}}
		\end{aligned}
	\end{equation}
	where we use the market parameters of our current paper (Section \ref{sec:model}) as notation. We can see $\boldsymbol{\pi}^*_t$ in \eqref{eq:HK_strategy} is exactly the optimal portfolio given by Equation (3.4) in \cite{HK2018}.
\end{examplec}

\begin{table}[htbp]\centering
	\small
	\begin{tabular}{c | c | c | c | c | c }
		Literature & Context & RA & RS & LA & First-order RA\\
		\hline
		\cite{M1969} &          CRRA and CARA         & $\checkmark$ &  &   &     \\
		\hline
		\cite{C2000} &          Option payoff with HARA utilities   & $\checkmark$ & $\checkmark$ &  $\checkmark$ &  $\checkmark$   \\
		\hline
		\cite{BKP2004} & Loss aversion case           & $\checkmark$ & $\checkmark$ & $\checkmark$ &    \\
		\hline
		\cite{BKP2004} & Kinked power utility case    & $\checkmark$ &  &  $\checkmark$ & $\checkmark$  \\
		\hline
		\cite{LSW2017} & Participating insurance contracts & $\checkmark$ & $\checkmark$ & $\checkmark$ & $\checkmark$  \\
		\hline
		\cite{HK2018} & First-Loss schemes in hedge funds    & $\checkmark$ & $\checkmark$ & $\checkmark$ & $\checkmark$   \\
		\hline
		\cite{HLLM2019} & Incentive schemes in pension funds    & $\checkmark$ & $\checkmark$ &  $\checkmark$ & $\checkmark$  \\
		\hline
		\cite{LL2020} & Principal's constraint   & $\checkmark$ & $\checkmark$ &  $\checkmark$ &   \\
		\hline
	\end{tabular}
	\captionof{table}{Unifying the literature}\label{tab1}
\end{table}

Table \ref{tab1} summarizes the portfolios above and in the literature in terms of four-term division. In the above and other literature of continuous-time portfolio selection, the type of optimal investment problems involve long computation and complex expression in the portfolio. In the following Section \ref{sec:econ}, we will see that our formula \eqref{eq:main} not only serves to ease computation, but also provides analytical tractability for financial analysis with a delicate four-term division.



%
%
%
%

\section{Economic meanings}\label{sec:econ}
We illustrate the economic meanings of our optimal portfolio formula for PHARA utilities in three aspects. First, we conduct an asymptotic analysis to the optimal portfolio $\boldsymbol{\pi}_t^*$ and the optimal wealth $X_t^*$ in terms of the pricing kernel $\xi_t$ for $t \in (0, T)$. Second, we numerically plot and illustrate the meaning of the PHARA utility. Finally, we compare the optimal portfolios under the PHARA utility (using Example \ref{ex:PHARA_complex}) with other utility families  analytically and numerically. We elaborate special features of the PHARA portfolio.

\subsection{Asymptotic analysis}\label{sec:asymptotic}
We conduct an asymptotic analysis for the general PHARA utility to illustrate financial insights in Theorem \ref{thm_option table}, following the ideas of \cite{C2000} and \cite{LL2023}. As $\xi_t$ is an indicator of the market state, by asymptotic analysis we know the risk-taking behavior of the portfolio. Theorem \ref{thm_option table} serves a general version of asymptotic analysis and adopts detailed limiting techniques. One can directly apply Theorem \ref{thm_option table} for asymptotic financial analysis given a specific PHARA utility. The proof is stated in Appendix \ref{app:proof}. 


\begin{theorem}	\label{thm_option table}
	Suppose the setting in Theorem \ref{thm_main} holds. Fix $t \in (0,T)$ and $y^* \in (0, \infty)$. 
	\begin{enumerate}[(i)]
		
		\item \label{thm_table_1} 
		As $\xi_t \rightarrow 0$, we have
		\begin{equation}		
			X_t^*  \rightarrow \infty, \;\;
			\boldsymbol{\pi}_t^{*}  \rightarrow \boldsymbol{\infty},\;\;
			\frac{\boldsymbol{\pi}_t^{*}}{X_t^*}  \rightarrow \frac{1}{R} (\boldsymbol{\sigma}^\top)^{-1} \boldsymbol{\theta}, 
		\end{equation}
		and more detailed results for different terms are given in Tables \ref{tab:thm_asymptotic1}-\ref{tab:thm_asymptotic2}.
		\item \label{thm_table_2}
		As $\xi_t \rightarrow \infty$, we have
		\begin{equation}
			\frac{X_t^*}{a_0 e^{-r(T-t)}} \rightarrow 1, \;\;
			\boldsymbol{\pi}_t^{*}  \rightarrow \mathbf{0},\;\;
			\frac{\boldsymbol{\pi}_t^{*}}{X_t^*}  \rightarrow \mathbf{0},
		\end{equation}
		and more detailed results for different terms are given in Tables \ref{tab:thm_asymptotic1}-\ref{tab:thm_asymptotic2}.
		
		\begin{table}[htbp]	
			\centering
			\begin{tabular}{c|c|c|c|c}
				\hline
				& $X_t^*$  & $X_t^D$  & $X_t^A$  & $X_t^R$   \\
				\hline
				$\xi_t \rightarrow 0$ & $\infty$ & 0 & $A_n e^{-r(T-t)}$ & $\infty$  \\
				\hline
				$\xi_t \rightarrow \infty$, $\gamma_0^+ = \infty$ & $a_0 e^{-r(T-t)}$  & 0 & $a_0 e^{-r(T-t)}$ & $0$  \\
				\hline
				$\xi_t \rightarrow \infty$, $\gamma_0^+ < \infty$ & $a_0 e^{-r(T-t)}$  & $a_0 e^{-r(T-t)}$ & 0 & $0$  \\
				\hline
			\end{tabular}
			\caption{Dynamic analysis for $X_t^*$.}
			\label{tab:thm_asymptotic1}
		\end{table}
		\begin{table}[htbp]	
			\centering
			\begin{tabular}{c|c|c|c|c|c}
				\hline
				&  $\boldsymbol{\pi}_t^{*}$ &$\frac{\boldsymbol{\pi}_t^{*}}{X_t^*}$  &  $\boldsymbol{\pi}_t^{(2)}$  & $\boldsymbol{\pi}_t^{(3)}$ & $\boldsymbol{\pi}_t^{(4)}$   \\
				\hline
				$\xi_t \rightarrow 0$ & $\boldsymbol{\infty}$  & $\frac{1}{R}(\boldsymbol{\sigma}^\top)^{-1} \boldsymbol{\theta}$ & $\mathbf{0}$ & $-\frac{1}{R} A_n e^{-r(T-t)} (\boldsymbol{\sigma}^\top)^{-1} \boldsymbol{\theta} $ & $\mathbf{0}$  \\
				\hline
				$\xi_t \rightarrow \infty$, $\gamma_0^+ = \infty$ & $\mathbf{0}$ & $\mathbf{0}$ & $\mathbf{0}$ & $-\frac{1}{R} a_0 e^{-r(T-t)} (\boldsymbol{\sigma}^\top)^{-1} \boldsymbol{\theta}$ & $\mathbf{0}$ \\
				\hline
				$\xi_t \rightarrow \infty$, $\gamma_0^+ < \infty$ & $\mathbf{0}$ & $\mathbf{0}$ & $\mathbf{0}$ & $\mathbf{0}$ & $-\frac{1}{R} a_0 e^{-r(T-t)} (\boldsymbol{\sigma}^\top)^{-1} \boldsymbol{\theta}$ \\
				\hline
			\end{tabular}
			\caption{Dynamic analysis for $\boldsymbol{\pi}^*_t$.}
			\label{tab:thm_asymptotic2}
		\end{table}
	\end{enumerate}
\end{theorem}
We first analyze the wealth process. When the market state is good ($\xi_t \rightarrow 0$), we see that the wealth is going to infinity, which means that the portfolio performs well and leads to an increasing wealth process. The increase is mainly due to $X_t^R$, caused by the strictly concave CRRA parts in the utility. The CRRA part on $(a_n, \infty)$ also provides an amount of wealth $X_{t,n}^A$, which approaches a positive term $A_n e^{-r(T-t)}$. In this scenario, the other wealth terms $X_{t,k}^A$, $k = 0, \cdots, n-1$ and $X_t^D$ are negligible and approach 0. When the market state is bad ($\xi_t \rightarrow \infty$), we see that the wealth is going to a lower bound ($a_0 e^{-r(T-t)}$), which means that the portfolio performs bad and lies at the least possible level (otherwise the utility value will be negative infinity). This lower bound level is also known as liquidation boundary; see \cite{HK2018}. The amount of wealth is contributed by $X_{t,0}^D$ (if $\gamma_0^+ < \infty$) or $X_{t,0}^A$ (if $\gamma_0^+ = \infty$). The former one $X_{t,0}^D$ appears because the left- and right- derivatives at $a_0$ are not equal, which makes the decision maker keep the least amount of wealth $a_0 e^{-r(T-t)}$ to avoid liquidation. When $\gamma_0^- = \gamma_0^+ = \infty$, the Loss Aversion term $X_{t,0}^A$ appears because the utility is CRRA on $(a_0, a_1)$. This will provide a same amount of deposit wealth for the benchmark $A_0 e^{-r(T-t)} = a_0 e^{-r(T-t)}$.

Now we discuss the optimal portfolio. When the market state is good ($\xi_t \rightarrow 0$), the optimal risky investment percentage approaches the Merton percentage $\frac{1}{R} (\boldsymbol{\sigma}^\top)^{-1} \boldsymbol{\theta}$. In a good state, the decision maker does not gamble and the RS term $\boldsymbol{\pi}_t^{(2)}$ approaches zero. By definition, the LA term $\boldsymbol{\pi}_t^{(3)}$ is actually a weighted sum of all discounted benchmarks. As $\xi_t \rightarrow 0$, the weight $q_n$ approaches 1 and the only benchmark left in a good state is the largest one $A_n$, while all other weights $q_k$, $k = 0, \cdots, n-1$, approach 0. As a result, $\boldsymbol{\pi}_t^{(3)}$ approaches $-\frac{1}{R} (\boldsymbol{\sigma}^\top)^{-1} \boldsymbol{\theta} A_n$. In the good state, the First-order RA term asymptotically becomes zero and all weights of the non-differentiable benchmarks vanish similarly. As a whole, the only benchmark which $\boldsymbol{\pi}_t^{(3)} + \boldsymbol{\pi}_t^{(4)}$ tracks in the good state is the largest (richest) one, $A_n e^{-r(T-t)}$.  
When the market state is bad ($\xi_t \rightarrow \infty$), the optimal risky investment percentage approaches $\mathbf{0}$. It means that to invest all on the risk-free asset is optimal. Because the pricing kernel process has a rather large value, it is less likely to make money by gambling on the risky assets. Hence, the RS term $\boldsymbol{\pi}_t^{(2)}$ approaches $\mathbf{0}$. The Merton term $\boldsymbol{\pi}_t^{(1)}$ contributes to a risky investment $\frac{1}{R}(\boldsymbol{\sigma}^\top)^{-1} \boldsymbol{\theta} a_0 e^{-r(T-t)}$, while either the LA term $\boldsymbol{\pi}_t^{(3)}$ or the first-order RA term $\boldsymbol{\pi}_t^{(4)}$ deducts this risky investment in cases that $\gamma_0^+ = \infty$ or $\gamma_0^+ < \infty$. As a whole, the only benchmark which $\boldsymbol{\pi}_t^{(3)} + \boldsymbol{\pi}_t^{(4)}$ tracks in the bad state is the smallest (poorest) one, $a_0 e^{-r(T-t)}$. As a result, the total risky investment $\boldsymbol{\pi}_t$ asymptotically approaches $\mathbf{0}$, making the optimal wealth process evolve along the lower bound $a_0 e^{-r(T-t)}$. 

\subsection{Numerical illustration}\label{sec:numerical}
We illustrate the economic meaning of the PHARA utility $U$ by investigating the optimal portfolio in Example \ref{ex:PHARA_complex} and Figure \ref{fig}. Note that here $m=1$, and there is no bold symbol. We denote the one-dimensional optimal portfolio by $\pi^*$. First, we can directly write the optimal portfolio from \eqref{eq:main}:
\begin{equation}\label{eq:example}
	\begin{aligned}
		{\pi}^*_t &= \underbrace{\frac{\theta}{\sigma R} X^*_t}_{\text{Merton term}} + \underbrace{\frac{e^{-r(T-t)}}{\sigma \sqrt{T-t}} 
			\left( (a_2-a_1) \Phi'\(d_1\(\frac{\gamma_1^{+}}{y^* \xi_t}\)\) + (a_3-a_2) \Phi'\(d_1\(\frac{\gamma_2^{+}}{y^* \xi_t}\)\) \right)}_{\text{risk seeking}}\\
		&\quad \quad \quad \underbrace{-\frac{\theta e^{-r(T-t)}}{\sigma R} \left(A_0 q_0 + A_3 q_3 + A_4 q_4 \right)}_{\text{loss aversion}}
		\underbrace{-\frac{\theta e^{-r(T-t)}}{\sigma R} \left(a_0 p_0 + a_1 p_1 + a_2 p_2 + a_4 p_4\right) }_{\text{first-order risk aversion}}\\
		&:= {\pi}_t^{(1)} + {\pi}_t^{(2)} + {\pi}_t^{(3)} + {\pi}_t^{(4)}.
	\end{aligned}
\end{equation}
In Figure \ref{fig2}, we numerically demonstrate the optimal portfolio \eqref{eq:example} for the PHARA utility $U$ of Figure \ref{fig}. The optimal portfolios under a HARA utility and a SAHARA utility (symmetric asymptotic HARA; proposed by \cite{CPV2011}) are also plotted. We will later discuss and compare these portfolios in Section \ref{sec:comparison}.
\begin{figure}[htbp]
	\centering
	\includegraphics[width=0.95\textwidth]{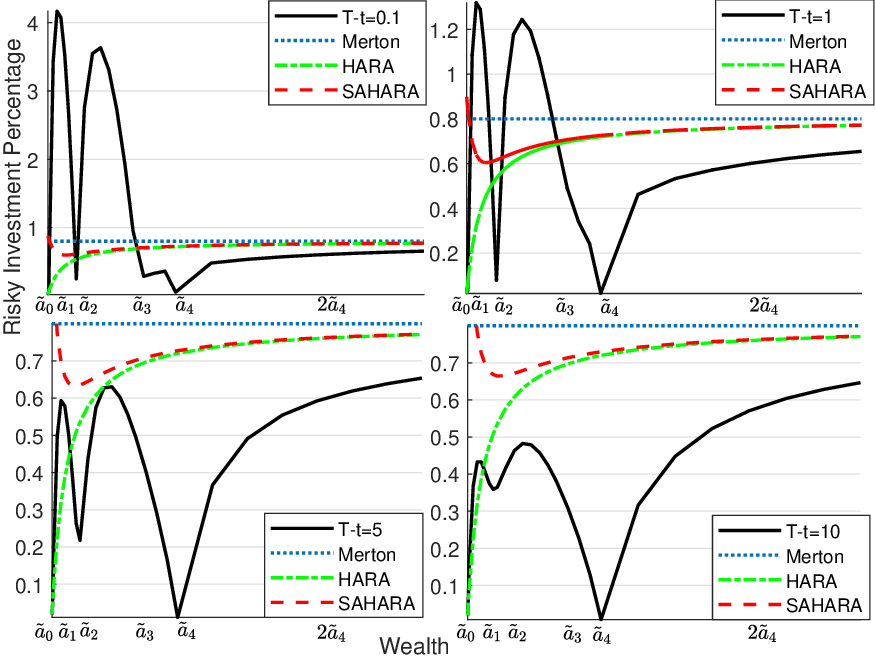}
	\caption{Comparison among portfolios ($y$-axis: ``Risky Investment Percentage" means the optimal percentage invested in the risky asset ${\pi}^*_t/X_t^*$, $x$-axis: ``Wealth" means $X_t^*$). The black solid line is the  portfolio \eqref{eq:example} for the PHARA utility of Figure \ref{fig}. The other portfolios are Merton \eqref{eq:M_strategy}, HARA \eqref{eq:HARA_strategy} and SAHARA \eqref{eq:SAHARA_strategy}; to be consistent with \eqref{eq:example}, the parameter setting of the HARA utility \eqref{eq:HARA} is given by $a_0 = 4$ and $R = 0.5$ and the SAHARA utility \eqref{eq:SAHARA} is given by $\psi = R = 0.5$ and $v = a_0 = 4$; \textbf{here $\beta = 20$}.   
		The $x$-axis limits are scaled by $e^{-r (T-t)}$, and $\tilde{a}_k := a_k e^{-r (T-t)}$ for $k = 0, 1, \cdots, 4$. The market setting includes $r = 0.05$, $\sigma = 0.3$, $\mu = 0.086$ (hence, $\theta = (\mu-r)/\sigma = 0.12$) and $R = 0.5$. The Merton percentage is $\frac{\theta}{\sigma R} = 0.8$.}
	\label{fig2}
\end{figure}

By numerical illustration and analytical analysis to \eqref{eq:example}, we obtain the following economic meanings, some of which is also shown in asymptotic analysis in Section \ref{sec:asymptotic}. Here the effect of non-differentiable points is our main novel finding.
\begin{enumerate}[(a)]
	\item Generally, the contour of the optimal portfolio is ``\textbf{multiple-peak-multiple-valley}" with the tail trend to the Merton term.
	
	\item \textbf{Peak on linearity}: The optimal portfolio gambles drastically on the linear parts (here, $(a_1, a_2)$, $(a_2, a_3)$) of the concave envelope $U^{**}$. Especially, the risky investment percentage exceeds 100\% if it approaches to the terminal time. From the formula \eqref{eq:example}, we know that the percentage actually tends to infinity with a rate $(T-t)^{-1/2}$ when $t \rightarrow T$. The gambling effect is because of the risk seeking term ${\pi}^{(2)}$.  
	
	\item \textbf{Valley at non-differentiability}: The optimal portfolio becomes extremely conservative at the non-differentiable points (here, $a_0, ~a_1, ~a_2, ~a_4$) of the concave envelope $U^{**}$. 
	We see that the percentage actually tends to zero at non-differentiable points. The conservative effect is because of the first-order risk aversion term ${\pi}^{(4)}$.
	
	\item \textbf{Merton trend}: The optimal risky investment percentage tends to the Merton percentage (here, $\frac{\theta}{\sigma R} = 0.8$) if the wealth is large enough. In this case, for $i= 2,3, 4$, we have ${\pi}^{(i)}_t/X_t^* \rightarrow 0 \;(X_t^* \rightarrow \infty)$. As a result, the trend is due to ${\pi}^{(1)}$.
	
	\item The decreasing effect of the loss aversion term ${\pi}^{(3)}$ at the benchmark points (here, $A_3, ~A_4$) is slight. At these points, surprisingly, the decision maker is still taking great risk. 
\end{enumerate}

\subsection{Portfolio comparison}\label{sec:comparison}

In Figure \ref{fig2}, we also plot the strategies under the HARA utility and the SAHARA utility proposed by \cite{CPV2011}. Analytically, the SAHARA (symmetric asymptotic HARA) utility family is also a generalization of the HARA family as the PHARA utility. The SAHARA utility characterizes a special case of non-monotone absolute risk aversion (ARA) functions, where the utility $U$ satisfies 
\begin{equation}\label{eq:SAHARA}
	\text{ARA}(x) := -\frac{U''(x)}{U'(x)} = \frac{1}{\sqrt{\(\frac{\beta}{\psi}\)^2 + \(\frac{x-v}{\psi}\)^2}},\;\; x \in \R,
\end{equation}
where $\psi > 0$, $\beta > 0$ and $v \in \R$ are parameters. According to Proposition 2.2 of \cite{CPV2011}, if $v = 0$, the expression of a SAHARA utility has a domain $\R$ and is given by 
\begin{equation}
	U(x) = c_1 + c_2 \tilde{U}(x),
\end{equation}
where  
\begin{equation}
	\tilde{U}(x) := \left\{
	\begin{aligned}
		& -\frac{1}{\psi^2 - 1} \( x + \sqrt{\beta^2 + x^2} \)^{-\psi} \( x + \psi \sqrt{\beta^2 + x^2} \), && \text{ if } \psi \neq 1;\\
		& \frac{1}{2} \log \( x + \sqrt{\beta^2 + x^2} \) + \frac{1}{2} \beta^{-2} x \( \sqrt{\beta^2 + x^2} - x \), && \text{ if } \psi = 1,
	\end{aligned}
	\right.\;\;\;\; x \in \R, 
\end{equation}
where $c_1, c_2 \in \R$ are constants; we simply set $c_1 = 0$ and $c_2 = 1$. If $\beta = 0$, the SAHARA utility reduces to the HARA utility. In our paper, on the contrary, the PHARA utility focuses on the different risk aversion on different parts of the domain. On each part, the ARA function is hyperbolic, i.e., for $k \in \{0, 1, \cdots, n\}$, 
for any $x \in(a_k, a_{k+1})$, 
\begin{equation}\label{eq:ARA}
	\begin{aligned}
		\text{ARA}(x) := -\frac{U''(x)}{U'(x)} =  \frac{R_k}{x - A_k}
		= \left\{
		\begin{aligned}
			& 0, &&\text{ if } R_k = 0;\\
			& \alpha_k, &&\text{ if } R_k = \infty, A_k = -\infty,  \alpha_k \in (0, \infty);\\
			& \frac{1}{\frac{1}{R_k}x - \frac{A_k}{R_k}}, &&\text{ if } R_k \in (0, \infty),
		\end{aligned}
		\right.
	\end{aligned}
\end{equation}
where $A_k, R_k, \alpha_k$ are given in Definition \ref{def_PHARA}. It is clear that the ARA function \eqref{eq:ARA} is locally monotone (decreasing) but not globally monotone. As a result, the two families of PHARA and SAHARA utility functions lead to very different optimal portfolios. 

In the one-dimensional Black-Scholes model ($m=1$), recalling the CRRA portfolio \eqref{eq:CRRA_strategy} in Example \ref{ex:M}, \cite{M1969}'s optimal portfolio for 
\begin{equation}
	U(x) = \frac{1}{1-R}x^{1-R}, \;\; x \in (0, \infty), 
\end{equation}
is given by 
\begin{equation}\label{eq:M_strategy}
	\pi_t^{\text{Merton}} = \frac{\theta}{\sigma R} X_t^*, \;\; t \in [0, T],    
\end{equation}
which is proportional to the corresponding optimal wealth $X_t^*$. For the HARA utility, using the convention in Definition \ref{def_PHARA}, we write 
\begin{equation}\label{eq:HARA}
	U(x) = \frac{1}{1-R} (x-a_0)^{1-R}, \;\; x \in (a_0, \infty),
\end{equation}
where $a_0$ is a constant in $\R$. According to Theorem \ref{thm_main}, the portfolio for \eqref{eq:HARA} is directly given by 
\begin{equation}\label{eq:HARA_strategy}
	\pi_t^{\text{HARA}} = \frac{\theta}{\sigma R} \( X_t^* - a_0 e^{-r(T-t)}\),~~ t \in [0, T].
\end{equation}
The optimal portfolio \eqref{eq:HARA_strategy} is a ratio of the optimal wealth (Merton term) plus a LA correction. For the PHARA utility, the optimal portfolio \eqref{eq:main} is a ratio of the optimal wealth (Merton term) plus three corrections (RS, LA, First-order RA) based on different parts of the utility function. Finally, the SAHARA utility leads to the optimal portfolio 
\begin{equation}\label{eq:SAHARA_strategy}
	\pi_t^{\text{SAHARA}} = \frac{\theta}{\sigma \psi} \sqrt{\(X_t^* - v e^{-r(T-t)}\)^2 + b_t^2}, ~~ t \in [0, T],
\end{equation}
where $b_t := \beta e^{-(r-\frac{\theta^2}{2\psi^2})(T-t)}$. This portfolio does not contain a proportional Merton term anymore. It is due to the non-monotone feature of the ARA function \eqref{eq:SAHARA}. Therefore, there is an essential discrepancy between PHARA and SAHARA utility families.

\subsection{Numerical comparison}\label{sec:nume_comparison}
In Figure \ref{fig2}, the portfolios \eqref{eq:HARA_strategy} and \eqref{eq:SAHARA_strategy} under the HARA utility and the SAHARA utility are plotted and compared with the proposed portfolio \eqref{eq:example}. 

Above all, all these portfolios share the Merton trend if the wealth is large enough. The HARA portfolio is a simplified version of the PHARA portfolio. When the market state is good ($\xi_t \rightarrow 0$), the asymptotic behavior of the HARA portfolio is to invest in a Merton term. When the market state is bad ($\xi_t \rightarrow \infty$), that of the HARA portfolio is to invest all in the risk-free asset. These two features are the same in the PHARA portfolio. In between, the HARA portfolio is to simply increase the risky investment with respect to the wealth. The PHARA portfolio has a  sophisticated behavior: peak on linearity (here, $(a_1, a_2)$, $(a_2, a_3)$) and valley at non-differentiability (here, $a_0$, $a_1$, $a_2$, $a_4$), as discussed in Section \ref{sec:numerical}.  

\begin{figure}[htbp]
	\centering
	\includegraphics[width=0.95\textwidth]{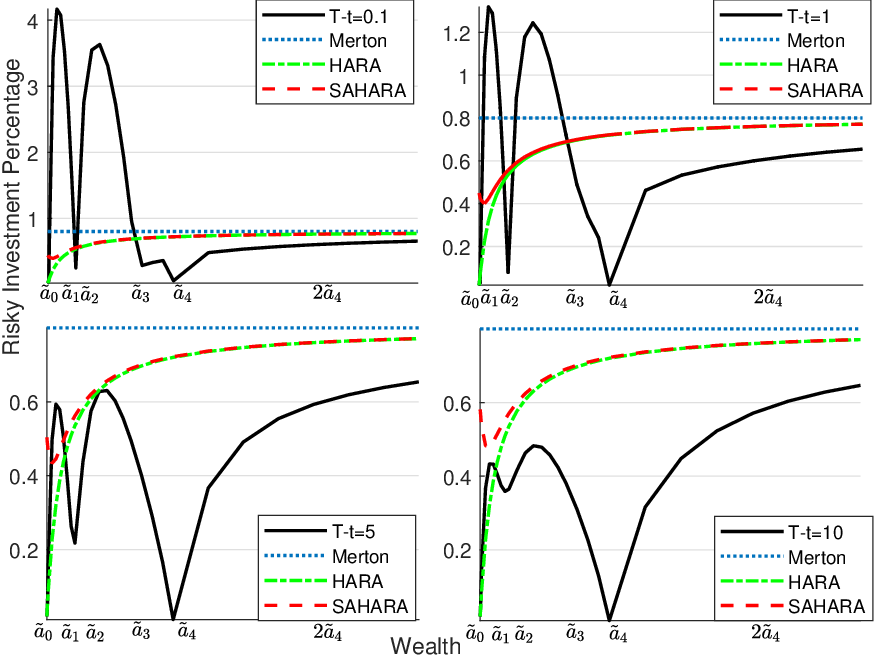}
	\caption{Comparison among portfolios ($y$-axis: ``Risky Investment Percentage" means the optimal percentage invested in the risky asset ${\pi}^*_t/X_t^*$, $x$-axis: ``Wealth" means $X_t^*$). All the setting is the same as Figure \ref{fig2} expect that \textbf{here $\beta = 5$ in the SAHARA utility \eqref{eq:SAHARA}}. }
	\label{fig2-beta5}
\end{figure}

For the SAHARA utility, the key parameter is $\beta > 0$. A feature of the SAHARA portfolio is to invest more in the risky assets when the market state is bad and the wealth is low, i.e., to gamble back at a recession state. When the market state is good, the SAHARA portfolio becomes conservative and decreases the risky investment. We additionally plot Figure \ref{fig2-beta5} with the only difference $\beta = 5$. As $\beta$ is smaller is Figure \ref{fig2-beta5}, the SAHARA portfolio becomes more similar to the HARA portfolio. This gambling effect at a bad state is mitigated in Figure \ref{fig2} when $\beta$ is small. 

In conclusion, the PHARA portfolio is more subtle than the HARA portfolio and the latter is more subtle than the Merton portfolio, while the SAHARA portfolio holds an investment logic opposite to the HARA portfolio to some degree. 

\section{Real-data study}
\label{sec:Real-data study}
We provide simulation results for the PHARA portfolio based on parameters collected empirically from the real-world financial data. We set $m=1$ in the following setting. 


We analyze the PHARA strategy \eqref{eq:example} over a time span of 10 years from 10/04/2013 to 10/04/2023. The first 8 years serve as the in-sample period. The latest two years serve as the out-of-sample period. Our investment time horizon, $T$, is set to be equivalent to the out-of-sample period of two years. The initial wealth, $x_0$, is arbitrarily set to $\$10$, and the risk free interest rate, $r$, is set to $0.05$ to capture the current two-year treasury rate. 
We select the SPDR S\&P 500 ETF Trust (SPY) to be our single asset. The S\&P 500 is a widely-accepted benchmark index and so we choose to trade an ETF that tracks this index. 
The estimated standard deviation of daily returns for the asset is denoted by $\sigma$. The mean of daily returns, the expected return rate, is denoted by $\mu$. We use the in-sample data to estimate $\mu = 0.27$ and $\sigma = 0.24$. Both $\mu$  and $\sigma$ are annualized to the two-year time horizon. This means that the average daily return is multiplied by 504 and the standard deviation of daily returns is multiplied by $\sqrt{504}$; here 504 is the number of trading days in two years.

\begin{figure}[htp]
	\centering
	\includegraphics[width=10cm]{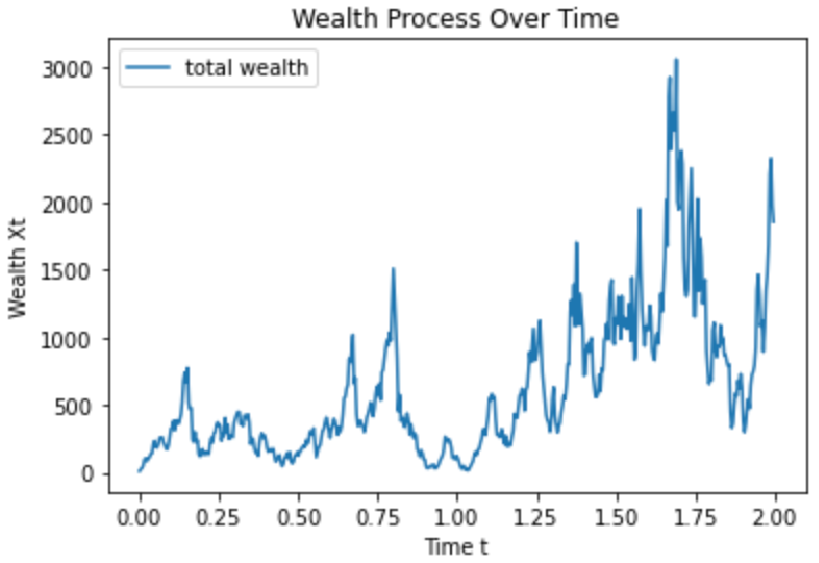}
	\caption{A simulation of the evolution of total wealth $X_t$ over the two year investing period}
	\label{fig:fig-4}
\end{figure}

Figure \ref{fig:fig-4} provides a snapshot example of one simulation of the strategy \eqref{eq:example}'s performance for the one dimensional case. We can see a high degree of volatility but an overall positive outcome in this particular simulation. The first measure of performance we utilize is the Sharpe Ratio; see, e.g., \cite{S1994}. Let $\text{po}$ denote the set of daily returns of the optimal portfolio and $R_{\text{po}}$ denote the average of this set. Additionally, let $R_{\text{fr}} = \frac{r}{504}$ denote the daily risk-free interest rate. Finally, let $\text{ex}$ denote the set of excess daily returns of the optimal portfolio. The Sharpe Ratio $S_a$ is given by
\begin{equation}
	S_a = \frac{R_{\text{po}} - R_{\text{fr}}}{\sigma_{\text{ex}}},
\end{equation}
where $\sigma_{\text{ex}}$ is the standard deviation of the set of excess daily returns of the optimal strategy. Figure \ref{fig:fig-5} provides a histogram of simple returns and Sharpe Ratios over 1000 simulations. Note that outliers (returns greater or less than three standard deviations from the mean) are removed from the simple returns histogram. Additionally, note that the final Sharpe Ratios are annualized to the two-year out-of-sample period.

\begin{figure}
	\centering
	\begin{minipage}{0.45\textwidth}
		\includegraphics[width=\linewidth]{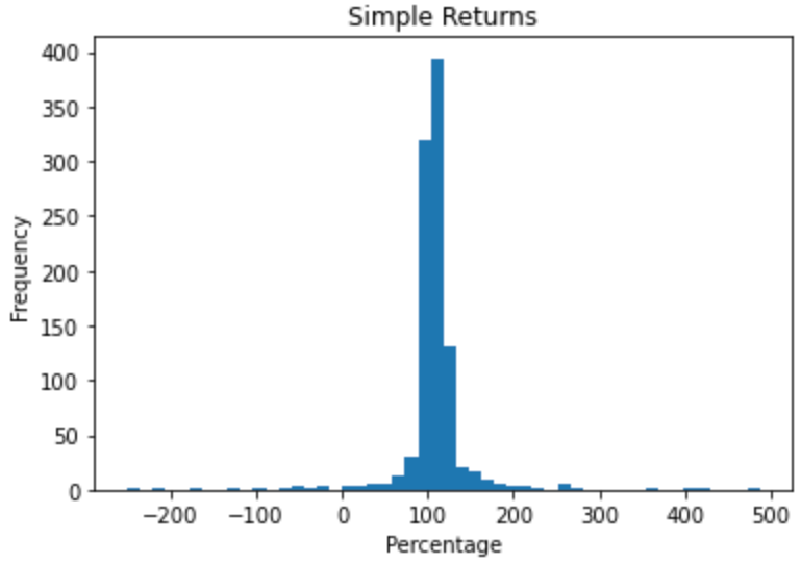}
	\end{minipage}\hfill
	\begin{minipage}{0.45\textwidth}
		\includegraphics[width=\linewidth]{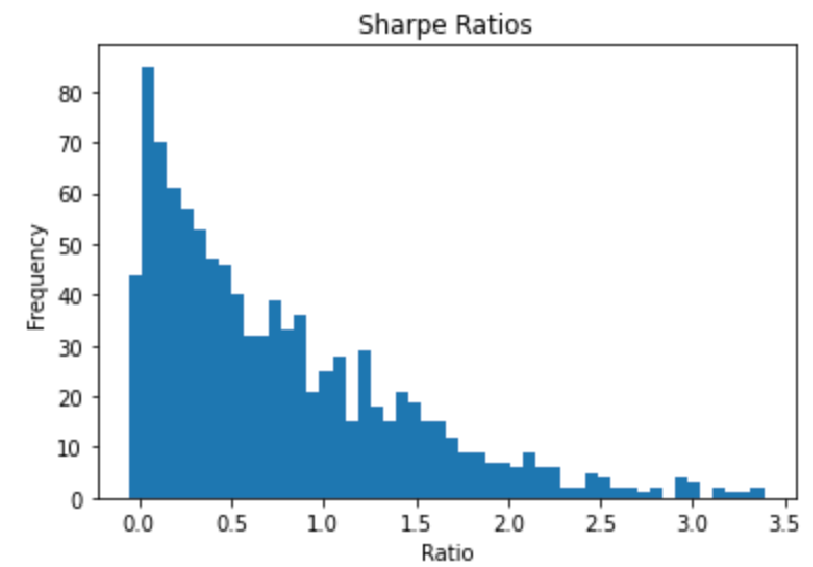} 
	\end{minipage}
	\caption{Histogram of simple returns and Sharpe Ratios for each of the 1000 simulations}
	\label{fig:fig-5}
\end{figure}

The second measure of performance we utilize is alpha or the capital asset pricing model \eqref{eq:alpha}; see, e.g.,  \cite{P2004}. We use the S\&P 500 as a benchmark index representative of the market and calculate daily returns for the out-of-sample period; this set of returns is denoted by $\text{ma}$. We also utilise $R_{\text{po}}$ and $R_{\text{fr}}$ in the calculation of $\alpha$:
\begin{equation}\label{eq:alpha}
	\alpha = R_{\text{po}} - R_{\text{fr}} - \beta(R_{\text{ma}} - R_{\text{fr}}),
\end{equation}
where $R_{\text{ma}}$ is the mean of the set of returns $\text{ma}$, and the systematic risk $\beta := \cov(\text{po}, \text{ma}) / \sigma_{\text{ma}}^2$ is given by the covariance of daily returns between the portfolio strategy and the market benchmark divided by the variance of the benchmark daily returns. The final alpha values are also annualized to the two-year out-of-sample period.

\begin{figure}[htp]
	\centering
	\includegraphics[width=10cm]{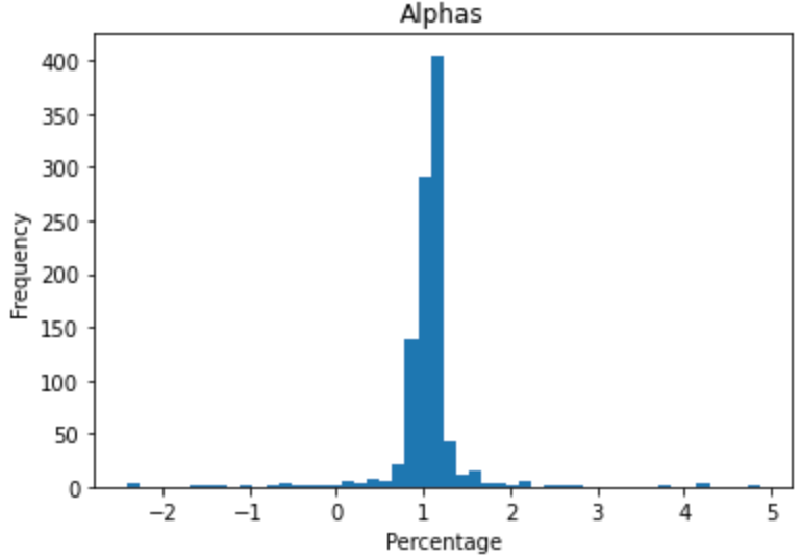}
	\caption{Alpha value distribution for 1000 simulations}
	\label{fig:fig-6}
\end{figure}

Figure \ref{fig:fig-6} shows the distribution of the calculated alpha values for the simulations. Note that outliers are also removed from the histogram of alphas. We can see the majority of the distribution is greater than 0\%. This shows us that we generate positive alpha with our strategy. Table \ref{tab:summary_stats} displays the mean and standard deviation for simple returns and our two measures of performance. From the summary statistics of simple returns, we see that the current PHARA portfolio is highly volatile yet clearly successful. From Figure \ref{fig:fig-5}, we can see modality around a simple return of $100 \%$. Additionally, Figure \ref{fig:fig-5} shows us that Sharpe Ratio returns are almost all positive but mostly situated below 1.0. The Sharpe Ratio histogram is also clearly right skewed. Traditionally, a Sharpe Ratio greater than 1.0 is regarded as a good performance. Our lower mean Sharpe Ratio demonstrates the fact that high volatility is a drawback of the PHARA portfolio. On the other hand, the high mean simple returns and positive mean alpha show that the PHARA portfolio provides high returns and better performance than market benchmarks. Specifically, the positive mean alpha shows that our strategy can outperform the market. Given this evidence, we conclude that our strategy may provide incredibly high rewards for a high volatility or risk.

\begin{table}[htbp]	
	\centering
	\begin{tabular}{c|c|c|c}
		\hline
		& Simple Returns  & $S_a$  & $\alpha$   \\
		\hline
		Mean & 115.95 & 0.75 & 1.1 \\
		\hline
		Standard Deviation & 157.56 & 0.68 & 1.64 \\
		\hline
	\end{tabular}
	\caption{Mean and standard deviation of measures of performance}
	\label{tab:summary_stats}
\end{table}

\section{Conclusion}\label{sec:conclusion}
We derive a unified formula for the optimal portfolio of the proposed piecewise hyperbolic absolute risk aversion (PHARA) utility family. In the field of asset allocation (e.g., hedge fund management), this formula can be directly applied to the portfolio strategy of a given utility function. The terms of the portfolio are illustrated with clear economic meaning on the risky behavior. We propose a general asymptotic analysis and find that non-concavity in the utility function causes great risk taking, while non-differentiability greatly reduces the risky behavior. The PHARA portfolio is featured with a contour of  ``multiple-peak-multiple-valley" with a tail trend to the Merton percentage, compared to the portfolios of HARA and SAHARA utilities. As a financial application, we empirically observe the performance of the PHARA portfolio over a two year period. We see that the PHARA portfolio is able to generate high returns and positive alpha, yet one of the drawbacks of the PHARA portfolio is high volatility. We can use the PHARA utilities as building blocks to approximate the optimal portfolio in the cases that only information on concavity and differentiablity of some intervals is available, which is left for a future direction.  

\vskip -0.2cm
\quad 

\noindent
{\bf Acknowledgements.}
The authors are grateful to the editors and two anonymous referees for valuable comments that have greatly improved this paper. Z. Liang acknowledges support from the National Natural Science Foundation of China (Grant No. 11871036, 12271290). Y. Liu gratefully acknowledges financial support from the research startup fund at The Chinese University of Hong Kong, Shenzhen. 
The authors thank members of the group of mathematical finance at the Department of Mathematical Sciences, Tsinghua University for their feedback and useful conversations.

\appendix

\section{Further discussion on market models}\label{app:correlated}

In the classic literature (\cite{M1971,KLSX1991}), one uses the following standard Black-Scholes model to characterize the prices of $m \geq 1$ risky assets (stocks): for $i = 1, \cdots, m$, 
\begin{equation}\label{eq:incomplete}
	\d S_{i, t} = S_{i, t}\left(\mu_{i,t} \d t + \sum_{j=1}^l \sigma_{i,j,t} \d W_{j,t}\right), \;0 \leq t \leq T.
\end{equation}
Here, $\{S_{i, t}\}_{0 \leq t \leq T}$ is the price process for the $i$-th risky asset. They are modelled by a $l$-dimensional standard Brownian motion $\{\mathbf{W}_t = (W_{1,t}, \cdots, W_{l,t})',\; 0 \leq t \leq T\}$. Each one-dimensional Brownian motion $\{W_{j,t}\}_{0 \leq t \leq T}$, $j = 1, \cdots, l$, is interpreted as a source of randomness (systematic risks) and $l$ is the number of sources. It is always assumed more than the number of risky assets, i.e., $l \geq m$. 
Further, $\mu_{i,t}$ means the instantaneous return rate (or, appreciation rate) of the $i$-th risky asset at time $t$. For $1 \leq i \leq m$, $1 \leq j \leq l$, the volatility coefficient (or, dispersion coefficient) $\sigma_{i,j,t}$ models the instantaneous intensity how the $j$-th source of randomness affects the price of the $i$-th risky asset at time $t$. Usually, one includes a risk-free asset with the price process modelled by
\begin{equation}\label{eq:incomplete_risk-free}
	\d S_{0,t} = r_t S_{0,t} \d t,  \;0 \leq t \leq T,
\end{equation}
where $r_{t}$ means the interest rate at time $t$. 

Further, if $l = m$, the market is called complete, otherwise it is incomplete. In the incomplete market ($l > m$), one could not hedge all the randomness ($l$ sources) by constructing a portfolio with $m$ risky assets and one risk-free asset. 
As our focus in this paper is the PHARA utility, we start from the complete market with constant market parameters, which already covers a lot of essential features in the financial market. Under the setting of a complete market, we are able to derive the closed-form results and discuss financial insights. Hence, we use the standard Black-Scholes model in a complete market: for any $i = 1, \cdots, m$, 
\begin{equation}\label{eq:model_constant}
	\d S_{i, t} = S_{i, t}\left(\mu_{i} \d t + \sum_{j=1}^m \sigma_{i,j} \d W_{j,t}\right), \;0 \leq t \leq T,
\end{equation}
where $\mu_i$ is interpreted as the expected return rate of the $i$-th risky asset and $\sigma_{i, j}$ is interpreted as a constant volatility coefficient between the $i$-th risky asset and the Brownian motion $W_j$. This model is given by \eqref{eq:stock} in the main text. 
Hence, the matrix of volatility $\boldsymbol{\sigma} = (\sigma_{i, j})_{1 \leq i, j \leq m}$ is a square matrix. 
We assume that there exists some $\epsilon > 0$ such that 
\begin{equation}\label{eq:ass_posi_defi_app}
	\boldsymbol{\eta}^\top (\boldsymbol{\sigma} \boldsymbol{\sigma}^\top) \boldsymbol{\eta} \geq \epsilon ||\boldsymbol{\eta}||_2^2 \;\; \text{ for any } \boldsymbol{\eta} \in \R^m,
\end{equation}
where $||\boldsymbol{\eta}||_2 := \left(\sum_{i=1}^m \eta_i^2\right)^{\frac{1}{2}}$. 
This regularity condition \eqref{eq:ass_posi_defi_app} of $\boldsymbol{\sigma}$ is standard and widely used in the literature; see \cite{KLS1987} and \cite{C2000}. By Lemma 2.1 of \cite{KLS1987}, we have the matrices $\boldsymbol{\sigma}^\top$ and $\boldsymbol{\sigma}$ are invertible and hence $\boldsymbol{\Sigma} := \boldsymbol{\sigma} \boldsymbol{\sigma}^\top$ is also invertible. 
For a comprehensive discussion on financial models, we refer to Chapter 2 of \cite{KS1998}. 


In the standard Black-Scholes models \eqref{eq:incomplete} and \eqref{eq:model_constant}, it is implicit to see that the Brownian motions $W_1, \cdots, W_l$ may have correlations. For $j, k = 1, \cdots, l$, we denote by $\tau_{j, k}$ the correlation coefficient between two Brownian motions $W_{j}$ and $W_{k}$ and $\tau_{j, k}$ has the range $[-1, 1]$. 
We discuss this issue as follows. In some pricing models, the risky assets are characterized by the following model: for $i = 1, 2, \cdots, m$, 
\begin{equation}\label{eq:correlated}
	\d S_{i,t} = S_{i,t}(\mu_i \d t + \zeta_i \d Q_{i,t}), \;\; 0 \leq t \leq T,
\end{equation}
where the expected return rate $\mu_i$ is the same as the model \eqref{eq:model_constant}, $Q_{i}$ is a one-dimensional Brownian motion and $\zeta_i$ is the corresponding volatility parameter. In the model \eqref{eq:correlated}, each source of randomness is directly described by the risky asset itself, and all the randomness can hence be hedged by the portfolio constructed by risky assets. 

In Proposition \ref{prop:correlated}, we show that the model \eqref{eq:correlated} is a special case of the standard model \eqref{eq:incomplete} in a complete market, and the standard model \eqref{eq:model_constant} with independent Brownian motions is a special case of the model \eqref{eq:correlated}. These relationships help understand the modelling in financial practice.
\begin{proposition}\label{prop:correlated}
	\begin{enumerate}
		\item The model \eqref{eq:correlated} is a special case of the standard Black-Scholes model \eqref{eq:model_constant} in a complete market.
		
		\item 
		The standard Black-Scholes model \eqref{eq:model_constant} in a complete market with an $m$-dimensional independent standard Brownian motion is a special case of the model \eqref{eq:correlated}.
		
	\end{enumerate}
\end{proposition}

\begin{proof}
	\begin{enumerate}
		\item 
		Fix a setting in the model \eqref{eq:correlated}. We only need to specify the following parameters in the standard Black-Scholes model \eqref{eq:model_constant} in a complete market: for any $i,j = 1, \cdots, m$, 
		\begin{equation}
			W_i = Q_i, \;\; \sigma_{i,j} = \zeta_i \id_{\{i = j\}} + 0 \id_{\{ i\neq j \}}.
		\end{equation}
		Hence, the model \eqref{eq:correlated} becomes a special case of the model \eqref{eq:model_constant}.
		
		\item 
		Fix a setting in the standard Black-Scholes model \eqref{eq:model_constant} with an $m$-dimensional standard independent Brownian motion. We proceed to specify the following parameters in the model \eqref{eq:correlated}:
		for $i = 1, \cdots, m$, the $i$-th Brownian motion is given by
		\begin{equation}\label{eq:correlated_Q}
			Q_{i,t} := \frac{1}{\zeta_i} \sum_{j = 1}^m \sigma_{i, j} W_{j,t}, \;\; 0 \leq t \leq T,
		\end{equation}
		the volatility parameter of $\{Q_{i,t}\}_{0 \leq t \leq T}$ is given by 
		\begin{equation}\label{eq:correlated_vol}
			\zeta_i := \(\sum_{j=1}^m \sigma_{i, j}^2 \)^{\frac{1}{2}}, 
		\end{equation}
		and the correlation coefficient matrix is given by $(\rho_{i,k})_{1 \leq i, k \leq m}$, where
		\begin{equation}\label{eq:correlated_rho}
			\rho_{i,k} := 
			\frac{1}{\zeta_i \zeta_k} \sum_{j = 1}^m \sigma_{i,j}\sigma_{k, j},
			\; 1 \leq i, k \leq m.
		\end{equation}
		
		Here is the proof. As the $m$ Brownian motions $W_{1}, \cdots, W_{m}$ are independent, for any $i = 1, \cdots, m$, we have $Q_{i}$ given by \eqref{eq:correlated_Q} is a Brownian motion. Further, we have
		\begin{equation}
			\begin{aligned}
				\frac{\d S_{i, t}}{S_{i, t}} &= \mu_i \d t + \sum_{j = 1}^m \sigma_{i, j} \d W_{j,t}\\
				&= \mu_i \d t + \zeta_i \d Q_{i, t},
			\end{aligned}
		\end{equation}
		which is a special case of the model \eqref{eq:correlated}. 
		
		For $i, k = 1, \cdots, m$, for $t \in [0, T]$, we compute the covariance of $Q_{i}$ and $Q_{k}$ by \eqref{eq:correlated_Q}:
		\begin{equation}
			\cov\(Q_{i, t}, Q_{k, t}\) = \E\[Q_{i, t} Q_{k, t}\] = \frac{1}{\zeta_i \zeta_k} \E\[\(\sum_{j = 1}^m \sigma_{i, j} W_{j, t}\) \(\sum_{j = 1}^m \sigma_{k, j} W_{j, t}\)\] = \frac{t}{\zeta_i \zeta_l} \sum_{j=1}^m  \sigma_{i,j} \sigma_{k,j}.
		\end{equation}
		Hence, the correlation coefficient between $Q_{i}$ and $Q_{k}$ is given by \eqref{eq:correlated_rho}.
	\end{enumerate}
\end{proof}
In practical modelling, the standard Black-Scholes model \eqref{eq:incomplete} includes the most generality, but also has the most parameters to estimate. A setting of a complete market reduces some estimate cost, where the standard Black-Scholes model \eqref{eq:model_constant} enjoys the advantage of generality and the independence assumption between Brownian motions enjoys the ease of fewer estimation. The model \eqref{eq:correlated} is in between. One may choose among these different models according to the data availability and the statistical capability.



%
%
%
%

\section{Basic theory of concave envelope}\label{app:envelope}
Here we introduce the basic theory of the concave envelope, in terms of definition, derivation and application to portfolio selection. Interested readers may further refer to the theory of convex analysis (e.g., \cite{R1970} and \cite{HL2001}).


A direct definition of the concave envelope $U^{**}$ is defined in \eqref{eq:conv_dominating} from a graphical perspective. The superscript notation of $U^{**}$ comes from the perspective of convex analysis, where the concave envelope of a function $U$ can be also defined as:
\begin{equation}
	\label{eq_def_ConEnv}
	\left\{
	\begin{aligned}
		& U^{*}(y) := \sup\limits_{x \in \text{dom } U}  (U(x) - yx), \; && y > 0;\\
		& U^{**}(x) := \inf\limits_{y \in \text{dom } U^*} (U^{*}(y) + xy), \; && x \in \text{dom } U.
	\end{aligned}
	\right.
\end{equation}
Here we actually apply the Legendre-Fenchel transform with $f(x) := -U(-x)$:
\begin{equation}
	\left\{
	\begin{aligned}
		f^{*}(y) &:= \sup\limits_{x \in \text{dom } f}  (yx - f(x)), \; && y > 0;\\
		f^{**}(x) &:= \sup\limits_{y \in \text{dom } f^*}  (yx - f^*(x)), \; && x \in \text{dom f},
	\end{aligned}
	\right.
\end{equation}
where $f^*$ is known as the convex conjugate function and $f^{**}$ is known as the \textit{convex} envelope of $f$ or the convex biconjugate function. As we focus on the concave envelope, with an abuse of notation, we still denote by $U^{**}$ the \textit{concave} envelope. We further define the optimizer in $U^*$ by 
\begin{equation}
	I(y) := \arg \sup\limits_{x \in \text{dom } U} \{ U(x) - y x\}, \; y > 0.
\end{equation}

As indicated by the graphical definition  \eqref{eq:conv_dominating}, we can see that if $U$ is concave on $\D = \text{dom } U$, then $U^{**} = U$. To derive the concave envelope of a non-concave function $U$, we provide a procedure below and refer to a classic result Lemma \ref{lem:conc} for theoretical foundation. Lemma \ref{lem:conc} essentially originates from Lemma 2.3 of \cite{WXZ2019} and Lemma 5.1 of \cite{BC1994}, and is highly related to Lemma 6.3 of \cite{BS2014}. The proof is referred to Lemma 5.1 of \cite{BC1994}.
\begin{lemma}
	\label{lem:conc}
	Suppose that $U$ is continuous. The set $\mathcal{A} := \{ x \in \mathcal{D}: U(x) \neq U^{**}(x) \}$ is represented as a union of at most countably many disjoint open intervals, and $U^{**}$ is linear on each of the above intervals. 
\end{lemma}
Based on Lemma \ref{lem:conc}, if $U$ is not concave, we have a descriptive procedure to generate its concave envelope. We can use this procedure to establish the concave envelope in Figure \ref{fig}.
\begin{enumerate}[(i)]
	\item Find out the non-concave parts of $U$:
	$$
	\mathcal{B} := \left\{ x \in \D: U \text{ is not concave at } x \right\}.
	$$
	Write $\mathcal{B} = \cup_{j=1}^\infty \mathcal{B}_j$, where $\mathcal{B}_j$ is an interval. Note that the set $\mathcal{B}$ has plenty of common intervals as the set $\mathcal{A}$ in Lemma \ref{lem:conc}.
	
	\item On each $\mathcal{B}_j$, $j \geq 1$, use the tangent lines and linear connections which dominate $U$.
	
	\item Combine the concave parts and the linear parts as a new function $\tilde{U}$ dominating $U$. 
	
	\item Modify the function $\tilde{U}$ with new linear connections such that the modified function is concave on the domain. 
	
	\item Denote the modified function by $U^{**}$ and this function $U^{**}$ is the concave envelope of $U$.
\end{enumerate}

The concave envelope and the Legendre-Fenchel transform have an important application to portfolio selection in non-concave utility optimization. Assume the market is complete. According to the martingale method, Problem \eqref{prob-main} is equivalent to the terminal wealth optimization problem \eqref{prob-kernel}: 
\begin{equation}\label{prob-kernel}
	\begin{aligned}
		& \max_{X_T \in \F_T, \E[\xi_T X_T] \leq x_0, X_T \in \text{dom } U} \E[ U(X_T)].
	\end{aligned}
\end{equation}
Applying Lagrange duality methods and the above Equation \eqref{eq_def_ConEnv}, the optimal terminal wealth is given by
$$
X_T^* 
= \arg \sup\limits_{x \in \text{dom } U}  (U(x)- y \xi_T x) = I(y \xi_T),
$$
where the Lagrangian multiplier $y>0$ satisfies $\E[\xi_T I(y \xi_T)] = x_0$. For a complete procedure of solving the portfolio selection problem in non-concave utility optimization, we refer to Appendix A of \cite{LL2023}.


\section{Proofs}\label{app:proof}
\subsection{PHARA utility: Proof of Proposition \ref{prop:PHARA}}
\begin{proof}[Proof of Proposition \ref{prop:PHARA}]
	We prove the first statement. Let $\mathcal{A} = \{ x \in \D : U^{**}(x) - U(x) \neq 0 \}$. Then $\mathcal{A}$ is an open subset of $\D$ as the function $U^{**} - U$ is continuous.
	Write $\mathcal{A} = \cup_{i = 1}^\infty \mathcal{A}_i$, where $\mathcal{A}_i$ is an open interval. By Lemma \ref{lem:conc}, we have $U^{**}$ is linear and hence HARA on each $\mathcal{A}_i$, which implies $U^{**}$ is PHARA on the set $\mathcal{A}$. Because $U^{**} = U$ on $\D \setminus \mathcal{A}$, $U^{**}$ is PHARA on the set $\D \setminus \mathcal{A}$. As a whole, $U^{**}$ is PHARA on the domain $\D$.
	
	For the second statement, we show that there exists some $U$ which is not PHARA but $U^{**}$ is PHARA. We propose a case that $U$ is convex and not HARA on some interval $I$ and is concave and PHARA otherwise. On this interval $I$, we have $U^{**} \neq U$ and hence $I \subset \mathcal{A}$. By Lemma \ref{lem:conc},  $U^{**}$ is linear on the interval $I$ and hence $U^{**}$ is HARA. Thus, $U^{**}$ is HARA on each part of its domain. A particular example is as follow:
	\begin{equation}
		U(x) = \left\{
		\begin{aligned}
			& \sin (x) + 1, && \text{ if } x \in \left[\frac{3}{2} \pi, 2 \pi\right];\\
			& (x - 2 \pi)^{\frac{1}{2}} + 1, &&  \text{ if } x \in (2 \pi, \infty),
		\end{aligned}
		\right.
	\end{equation}
	and
	\begin{equation}
		U^{**}(x) = \left\{
		\begin{aligned}
			& \gamma_1 (x - \frac{3}{2} \pi), &&  \text{ if } x \in \left[\frac{3}{2} \pi, a_1\right];\\
			& (x - 2 \pi)^{\frac{1}{2}} + 1, &&  \text{ if } x \in (a_1, \infty),
		\end{aligned}
		\right.
	\end{equation}
	where the domains of $U$ and $U^{**}$ are the same $\left[\frac{3}{2}\pi, \infty\right)$, and $a_1 \in (2 \pi, \infty)$ is the unique solution of the tangent equation:
	\begin{equation}
		\frac{(a_1 - 2 \pi)^{\frac{1}{2}} + 1}{a_1 - \frac{3}{2} \pi} = \frac{1}{2} (a_1 - 2 \pi)^{-\frac{1}{2}},
	\end{equation}
	and we define its slope $\gamma_1 := \frac{(a_1 - 2 \pi)^{\frac{1}{2}} + 1}{a_1 - \frac{3}{2} \pi}$. Hence, $U^{**}$ is PHARA on $[\frac{3}{2} \pi, \infty)$ while $U$ is not.
\end{proof}

\subsection{Optimal wealth process: Proof of Proposition \ref{Thm_opt_wealth_process}}\label{sec:proof_wealth}

\begin{proof}[Proof of Proposition \ref{Thm_opt_wealth_process}]	
	(1) Based on the martingale and duality method, the optimal terminal wealth is obtained from $X_T^* = \arg\sup\limits_{x \in \mathcal D}[U(x) - y^* \xi_T x]$ satisfying \eqref{equation of nu*}. 
	According to Definition \ref{def_PHARA} and Equation \eqref{eq_def_U_tilde}, we solve for $k \in \{0,1,\cdots,n\}$:
	
	\noindent
	(i) 
	if $y^* \xi_T \in (\gamma_{k}^{+}, \gamma_k^{-})$, then
	\begin{equation}
		X_T^*=\arg\sup_{x \in \mathcal D} \left[U(x) - y^* \xi_T x\right] = a_k.
	\end{equation}
	
	\noindent
	(ii) 
	if $y^* \xi_T \in (\gamma_{k+1}^{-}, \gamma_k^{+})$, then
	\begin{equation}\label{eq:wealth1}
		\begin{aligned}
			X_T^* = \arg\sup_{x \in \mathcal D} \left[U(x) - y^* \xi_T x\right] 
			&= (U')^{-1}|_{(a_k, a_{k+1})}\left(y^* \xi_T\right)\\
			&= \left\{
			\begin{aligned}
				& A_k + \left(\frac{\gamma_k^{+}}{y^* \xi_T }\right)^{\!R_k^{-1}}\!\!(a_k - A_k), &&\text{ if } R_k \neq 0, \infty;\\ 
				& a_k + \frac{1}{\alpha_k} \log\left(\frac{\gamma_k^{+}}{y^* \xi_T}\right), &&\text{ if }  R_k = \infty, A_k = \infty, \alpha_k > 0.\\ 
			\end{aligned}
			\right.
		\end{aligned} 
	\end{equation}
	
	\noindent (iii)
	$\bullet$
	if $y^* \xi_T = \gamma^+_k$ where $\gamma_{k}^+ \neq \gamma_{k+1}^-$, we have 
	\begin{equation}\label{eq:wealth2}
		X_T^* = \arg\sup_{x \in \mathcal D} \left[U(x) - y^* \xi_T x\right] = a_k;	
	\end{equation}

	\noindent
	$\bullet$
	if $y^* \xi_T = \gamma^-_{k+1}$ where $\gamma_{k+1}^- \neq \gamma_{k}^+$, we have 
	\begin{equation}\label{eq:wealth3}
		X_T^* = \arg\sup_{x \in \mathcal D} \left[U(x) - y^* \xi_T x\right] = a_{k+1};
	\end{equation}
	
	\noindent
	$\bullet$ 
	if $y^* \xi_T = \gamma^+_k$ where $\gamma_{k}^+ = \gamma_{k+1}^-$ ($U^{**}$ is linear on $[a_k, a_{k+1}]$), we have $X_T^*$ is multi-valued, i.e., $X_T^*$ can be any real number in the interval $[a_k, a_{k+1}]$:
	\begin{equation}\label{eq:wealth4}
		X_T^* \in \arg\sup_{x \in \mathcal D} \left[U(x) - y^* \xi_T x\right]  = [a_k, a_{k+1}].
	\end{equation}
	We define the index set 
	\begin{equation}
		\mathcal{K} := \{ k \in 
		\{0, 1, \cdots, n\} | \gamma_k^+ = \gamma_{k+1}^- \}.
	\end{equation}
	But as $\xi_T$ has a continuous distribution (i.e., $\p(\xi_T = y) = 0$ for any $y \in \R$), we have
	\begin{equation}\label{eq:zero_prob}
		\begin{aligned}
			\p\left(y^* \xi_T \in \{ \gamma_k^+, \gamma_k^- \}_{k \in \mathcal{K}}\right) &\leq \p\left(y^* \xi_T \in \{ \gamma_k^+, \gamma_k^- \}_{k = 0}^{n} \cup \{\gamma_{n+1}^-\} \right)\\ 
			&= \sum_{k = 0}^{n} \left( \p( \xi_T = \gamma_k^{+}/y^*) + \p( \xi_T = \gamma_k^{-}/y^*)\right) +  \p( \xi_T = \gamma_{n+1}^-/y^*) = 0.
		\end{aligned}
	\end{equation}
	Thus, in the almost-sure sense, it does not matter what the value of $X_T^*$ is in this case. Moreover, $U^{**}(x) \neq U(x)$ if and only if $x \in \cup_{k \in \mathcal{K}}(a_k, a_{k+1})$. 
	Further, based on \eqref{eq:zero_prob}, we have
	\begin{equation}
		\begin{aligned}
			\E[U(X_T^*)] &= \E\left[U(X_T^*) \id_{\{X_T^* \in \cup_{k \in \mathcal{K}}(a_k, a_{k+1})\}}\right] + \E\left[U(X_T^*) \id_{\{X_T^* \in (\cup_{k \in \mathcal{K}}(a_k, a_{k+1}))^c\}}\right]\\
			&= \E\left[U(X_T^*) \id_{\cup_{k \in \mathcal{K}} \{y^* \xi_T = \gamma_k^+ \}}\right] + \E\left[U(X_T^*) \id_{\{X_T^* \in (\cup_{k \in \mathcal{K}}(a_k, a_{k+1}))^c\}}\right]\\
			&= 0+\E\left[U(X_T^*) \id_{\{X_T^* \in (\cup_{k \in \mathcal{K}}(a_k, a_{k+1}))^c\}}\right]\\
			&= 0+\E\left[U^{**}(X_T^*) \id_{\{X_T^* \in (\cup_{k \in \mathcal{K}}(a_k, a_{k+1}))^c\}}\right]\\
			&= \E\left[U^{**}(X_T^*) \id_{\{X_T^* \in \cup_{k \in \mathcal{K}}(a_k, a_{k+1})\}}\right] + \E\left[U^{**}(X_T^*) \id_{\{X_T^* \in (\cup_{k \in \mathcal{K}}(a_k, a_{k+1}))^c\}}\right]\\
			&= \E[U^{**}(X_T^*)].
		\end{aligned}
	\end{equation}
	Hence, we show that Problem \eqref{prob-main} is equivalent to the concavification problem \eqref{prob:concavification}, 
	i.e., Problems \eqref{prob-main} and \eqref{prob:concavification} have the same optimal portfolio $\boldsymbol{\pi}^*$ and the optimal terminal wealth $X_T^*$. Summarizing \eqref{eq:wealth1}-\eqref{eq:wealth4}, we have that the optimal wealth $X_T^*$ is given by \eqref{eq_terminal} and \eqref{equation of nu*}.
	
	\noindent
	(2) Above all, we give out facts that will be used in the following computation: for any $b > a > 0$,
	\begin{equation}\small
		\begin{aligned}
			&\quad \xi_t^{-1} \E[\xi_T \id_{\{y^* \xi_T \in (a, b) \}}| \mathcal{F}_t]
			= e^{-r(T-t)}\left[\Phi \left(d_1\left(\frac{a}{y^*\xi_t}\right)\right)-\Phi\left(d_1\left(\frac{b}{y^*\xi_t}\right)\right)\right],\\
			&\quad \xi_t^{-1} \E\left[\xi_T^{1-\frac{1}{R_k}} \id_{\{y^* \xi_T \in (a, b) \}}\Big| \mathcal{F}_t\right]\\
			&= e^{-r(T-t)}\left(\frac{y^*}{b}\right)^{\frac{1}{R_k}} \frac{\Phi'\left(d_1\left(\frac{b}{y^* \xi_t}\right)\right)}{\Phi'\left(d^k\left(\frac{b}{y^* \xi_t}\right)\right)}\left[\Phi\left(d^k\left(\frac{a}{y^* \xi_t}\right)\right)-\Phi\left(d^k\left(\frac{b}{y^* \xi_t}\right)\right)\right],\\
			&\quad \xi_t^{-1} \E\left[\xi_T \log(\xi_T) \id_{\{y^* \xi_T \in (a, b) \}}\Big| \mathcal{F}_t\right]\\
			&= e^{-r(T-t)} \left(\log (\xi_t)+ \left(-r+\frac{||\boldsymbol{\theta}||_2^2}{2}\right)(T-t) \right) \left[\Phi\left(d_1\left(\frac{a}{y^* \xi_t}\right)\right) - \Phi\left(d_1\left(\frac{b}{y^* \xi_t}\right)\right)\right]\\
			&\quad + e^{-r(T-t)} \left(||\boldsymbol{\theta}||_2 \sqrt{T-t}\right) \left[\Phi'\left(d_1\left(\frac{a}{y^* \xi_t}\right)\right) - \Phi'\left(d_1\left(\frac{b}{y^* \xi_t}\right)\right)\right].
		\end{aligned}
	\end{equation}
	According to the martingale representation argument and the expression \eqref{eq_terminal} of $X_T^*$, we compute: 
	\begin{equation}\small
		\begin{aligned}
			X_t^* &= \xi_t^{-1} \E\left[\xi_T X_T^* | \mathcal{F}_t\right]\\
			&=\sum_{k=0}^{n} \Bigg( e^{-r(T-t)} a_k \left[\Phi \left(d_1\left(\frac{\gamma_k^{+}}{y^*\xi_t}\right)\right)-\Phi\left(d_1\left(\frac{\gamma_k^{-}}{y^*\xi_t}\right)\right)\right]\\
			&\quad + e^{-r(T-t)}  A_k \left[\Phi \left(d_1\left(\frac{\gamma_{k+1}^{-}}{y^*\xi_t}\right)\right)-\Phi\left(d_1\left(\frac{\gamma_k^{+}}{y^*\xi_t}\right)\right)\right]  \id_{\{ R_k \neq 0,\infty \}} \\
			&\quad + e^{-r(T-t)} \left[ a_k+\frac{-||\boldsymbol{\theta}||_2 \sqrt{T-t}}{\alpha_k} d_1\left(\frac{\gamma_k^{+}}{y^* \xi_t}\right)\right] \times \\
			&\quad \quad \quad \quad \left[\Phi \left(d_1\left(\frac{\gamma_{k+1}^{-}}{y^*\xi_t}\right)\right)-\Phi\left(d_1\left(\frac{\gamma_k^{+}}{y^*\xi_t}\right)\right)\right] \id_{ \{ R_k = \infty, A_k =-\infty, \alpha_k>0 \} } \\  
			&\quad + e^{-r(T-t)} (a_k-A_k) \frac{\Phi'\left(d_1\left(\frac{\gamma_k^{+}}{y^* \xi_t}\right)\right)}{\Phi'\left(d^k\left(\frac{\gamma_k^{+}}{y^* \xi_t}\right)\right)} \left[\Phi \left(d^k\left(\frac{\gamma_{k+1}^{-}}{y^*\xi_t}\right)\right)-\Phi\left(d^k\left(\frac{\gamma_k^{+}}{y^*\xi_t}\right)\right)\right] \id_{\{ R_k \neq 0,\infty \}}\\
			&\quad + e^{-r(T-t)} \frac{-||\boldsymbol{\theta}||_2 \sqrt{T-t}}{\alpha_k} \left[\Phi' \left(d_1\left(\frac{\gamma_{k+1}^{-}}{y^*\xi_t}\right)\right)-\Phi'\left(d_1\left(\frac{\gamma_k^{+}}{y^*\xi_t}\right)\right)\right] \id_{ \{ R_k = \infty, A_k = -\infty, \alpha_k > 0 \} } \Bigg).
		\end{aligned}
	\end{equation}
\end{proof}

\subsection{Optimal portfolio (general case): Proof of Theorem \ref{thm_opt_strategy}}

\begin{proof}[Proof of Theorem \ref{thm_opt_strategy}]
	According to the martingale method, we apply It\^{o}'s formula to $X_t = X^*_t(\xi_t)$ by using \eqref{eq:xi} and \eqref{sde of x}, and obtain the optimal portfolio vector
	\begin{equation}
		\boldsymbol{\pi}_t^* = -(\boldsymbol{\sigma}^\top)^{-1} \boldsymbol{\theta} \xi_t \frac{\partial X_t^*(\xi_t)}{\partial\xi_t}.
	\end{equation}
	In the following computation, we use facts that
	\begin{equation}\small
		\begin{aligned}
			&(-\xi_t)\cdot \frac{\partial}{\partial \xi_t}d_1\left(\frac{\gamma_{k+1}^{-}}{y^* \xi_t}\right)=-\frac{1}{||\boldsymbol{\theta}||_2 \sqrt{T-t}},\\
			&(-\xi_t)\cdot \frac{\partial}{\partial \xi_t}\Phi\left(d_1\left(\frac{\gamma_{k+1}^{-}}{y^* \xi_t}\right)\right)=-\frac{1}{||\boldsymbol{\theta}||_2 \sqrt{T-t}}\Phi'\left(d_1\left(\frac{\gamma_{k+1}^{-}}{y^* \xi_t}\right)\right),\\
			&(-\xi_t)\cdot \frac{\partial}{\partial \xi_t}\Phi'\left(d_1\left(\frac{\gamma_{k+1}^{-}}{y^* \xi_t}\right)\right)
			= -\frac{1}{||\boldsymbol{\theta}||_2 \sqrt{T-t}}\Phi'\left(d_1\left(\frac{\gamma_{k+1}^{-}}{y^* \xi_t}\right)\right)\left(-d_1\left(\frac{\gamma_{k+1}^{-}}{y^* \xi_t}\right)\right).
		\end{aligned}
	\end{equation}
	For each $k \in \{0, \cdots, n\}$, we compute as follows:
	
	\noindent
	$\bullet$
	if $R_k = 0$, we have $\gamma_k^{+}=\gamma_{k+1}^{-}$, and hence
	\begin{equation}
		\begin{aligned}
			&\quad \quad (-\xi_t) \frac{\partial }{\partial\xi_t} \left\{a_k   \left[\Phi\left(d_1\left(\frac{\gamma_k^{+}}{y^*\xi_t}\right)\right) - \Phi\left(d_1\left(\frac{\gamma_k^{-}}{y^*\xi_t}\right)\right)\right]\right\}\\
			&=\frac{a_k}{-||\boldsymbol{\theta}||_2\sqrt{T-t}}\left[\Phi'\left(d_1\left(\frac{\gamma_k^{+}}{y^*\xi_t}\right)\right) - \Phi'\left(d_1\left(\frac{\gamma_k^{-}}{y^*\xi_t}\right)\right)\right]\\
			&=\frac{a_k-a_{k+1}}{-||\boldsymbol{\theta}||_2\sqrt{T-t}}\Phi'\left(d_1\left(\frac{\gamma_{k+1}^{-}}{y^*\xi_t}\right)\right) +\frac{a_{k+1}}{-||\boldsymbol{\theta}||_2\sqrt{T-t}}\Phi'\left(d_1\left(\frac{\gamma_{k+1}^{-}}{y^*\xi_t}\right)\right) - \frac{a_k}{-||\boldsymbol{\theta}||_2\sqrt{T-t}}\Phi'\left(d_1\left(\frac{\gamma_k^{-}}{y^*\xi_t}\right)\right);
		\end{aligned}
	\end{equation}
	
	\noindent
	$\bullet$
	if $R_k=\infty,~ A_k =-\infty,~ \alpha_k > 0$, we have
	\begin{equation}
		\small
		\begin{aligned}
			&\quad \quad (-\xi_t) \frac{\partial }{\partial\xi_t} 
			\Bigg\{
			a_k   \left[\Phi\left(d_1\left(\frac{\gamma_k^{+}}{y^*\xi_t}\right)\right) - \Phi\left(d_1\left(\frac{\gamma_k^{-}}{y^*\xi_t}\right)\right)\right]\\
			&\quad \quad \quad \quad \quad\quad\quad+ \left(a_k + \frac{-||\boldsymbol{\theta}||_2 \sqrt{T-t}}{\alpha_k} d_1\left(\frac{\gamma_k^{+}}{y^* \xi_t}\right)\right) \left[\Phi\left(d_1\left(\frac{\gamma_{k+1}^{-}}{y^*\xi_t}\right)\right) - \Phi\left(d_1\left(\frac{\gamma_k^{+}}{y^*\xi_t}\right)\right)\right]\\
			&\quad \quad \quad \quad \quad\quad\quad+ \frac{-||\boldsymbol{\theta}||_2 \sqrt{T-t}}{\alpha_k} \left[\Phi'\left(d_1\left(\frac{\gamma_{k+1}^{-}}{y^*\xi_t}\right)\right) - \Phi'\left(d_1\left(\frac{\gamma_k^{+}}{y^*\xi_t}\right)\right)\right]
			\Bigg\}\\
			&= \frac{a_k}{-||\boldsymbol{\theta}||_2 \sqrt{T-t}} \left[\Phi'\left(d_1\left(\frac{\gamma_{k+1}^{-}}{y^*\xi_t}\right)\right) - \Phi'\left(d_1\left(\frac{\gamma_k^{-}}{y^*\xi_t}\right)\right)\right]\\
			&\quad \quad + \frac{1}{\alpha_k} \left[\Phi\left(d_1\left(\frac{\gamma_{k+1}^{-}}{y^*\xi_t}\right)\right) - \Phi\left(d_1\left(\frac{\gamma_k^{+}}{y^*\xi_t}\right)\right)\right]+\frac{-1}{\alpha_k}\left(d_1\left(\frac{\gamma_{k+1}^{-}}{y^* \xi_t}\right) - d_1\left(\frac{\gamma_{k}^{+}}{y^* \xi_t}\right)\right) \Phi'\left(d_1\left(\frac{\gamma_{k+1}^{-}}{y^* \xi_t}\right)\right)\\
			&= \frac{a_k}{-||\boldsymbol{\theta}||_2 \sqrt{T-t}} \left[\Phi'\left(d_1\left(\frac{\gamma_{k+1}^{-}}{y^*\xi_t}\right)\right) - \Phi'\left(d_1\left(\frac{\gamma_k^{-}}{y^*\xi_t}\right)\right)\right]\\
			&\quad \quad + \frac{1}{\alpha_k} \left[\Phi\left(d_1\left(\frac{\gamma_{k+1}^{-}}{y^*\xi_t}\right)\right) - \Phi\left(d_1\left(\frac{\gamma_k^{+}}{y^*\xi_t}\right)\right)\right] +\frac{1}{-||\boldsymbol{\theta}||_2 \sqrt{T-t}}(a_{k+1}-a_k) \Phi'\left(d_1\left(\frac{\gamma_{k+1}^{-}}{y^* \xi_t}\right)\right)\\
			&=\frac{1}{\alpha_k} \left[\Phi\left(d_1\left(\frac{\gamma_{k+1}^{-}}{y^*\xi_t}\right)\right) - \Phi\left(d_1\left(\frac{\gamma_k^{+}}{y^*\xi_t}\right)\right)\right]+\frac{a_{k+1}}{-||\boldsymbol{\theta}||_2\sqrt{T-t}}\Phi'\left(d_1\left(\frac{\gamma_{k+1}^{-}}{y^*\xi_t}\right)\right) - \frac{a_k}{-||\boldsymbol{\theta}||_2\sqrt{T-t}}\Phi'\left(d_1\left(\frac{\gamma_k^{-}}{y^*\xi_t}\right)\right),
		\end{aligned}
	\end{equation}
	where in the second equality we use $\gamma_{k+1}^{-} = \gamma_{k}^{+} e^{-\alpha_k (a_{k+1}-a_k)}$ (due to \eqref{eq_def_U_tilde});
	
	\noindent
	$\bullet$
	if $R_k \neq 0, \infty$, we have
	\begin{equation}\small
		\begin{aligned}
			&\quad \quad(-\xi_t) \frac{\partial }{\partial\xi_t} 
			\Bigg\{ a_k  \left[\Phi\left(d_1\left(\frac{\gamma^+_k}{y^*\xi_t}\right)\right) -\Phi\left(d_1\left(\frac{\gamma^-_k}{y^* \xi_t}\right)\right)\right] 
			+ A_k \left[ \Phi\left(d_1\left(\frac{\gamma^-_{k+1}}{y^*\xi_t}\right)\right) -\Phi\left(d_1\left(\frac{\gamma^+_k}{y^*\xi_t}\right)\right)\right] \\
			&\quad \quad \quad \quad \quad+ \left(a_k-A_k\right) \frac{\Phi'\left(d_1\left(\frac{\gamma^+_k}{y^*\xi_t}\right)\right)}{\Phi'\left(d^k\left(\frac{\gamma^+_k}{y^*\xi_t}\right)\right)}\left[ \Phi\left(d^k\left(\frac{\gamma^-_{k+1}}{y^*\xi_t}\right)\right) -\Phi\left(d^k\left(\frac{\gamma^+_k}{y^*\xi_t}\right)\right)\right] \Bigg\}\\
			& = \frac{a_k}{-||\boldsymbol{\theta}||_2 \sqrt{T-t}} \left[\Phi'\left(d_1\left(\frac{\gamma^+_k}{y^*\xi_t}\right)\right) -\Phi'\left(d_1\left(\frac{\gamma^-_k}{y^* \xi_t}\right)\right)\right] + \frac{A_k}{-||\boldsymbol{\theta}||_2 \sqrt{T-t}} \left[ \Phi'\left(d_1\left(\frac{\gamma^-_{k+1}}{y^*\xi_t}\right)\right) -\Phi'\left(d_1\left(\frac{\gamma^+_k}{y^*\xi_t}\right)\right)\right]\\
			& \quad \quad+\frac{ a_k-A_k}{-||\boldsymbol{\theta}||_2 \sqrt{T-t}} \frac{\Phi'\left(d_1\left(\frac{\gamma_k^{+}}{y^*\xi_t}\right)\right)}{\Phi'\left(d^k\left(\frac{\gamma_k^{+}}{y^*\xi_t}\right)\right)}\left[ \Phi'\left(d^k\left(\frac{\gamma_{k+1}^{-}}{y^*\xi_t}\right)\right) -\Phi'\left(d^k\left(\frac{\gamma_{k}^{+}}{y^*\xi_t}\right)\right) \right] +R_k^{-1} X^R_{t,k}e^{r(T-t)}\\
			&= \frac{a_k}{-||\boldsymbol{\theta}||_2 \sqrt{T-t}} \left[\Phi'\left(d_1\left(\frac{\gamma^+_k}{y^*\xi_t}\right)\right) -\Phi'\left(d_1\left(\frac{\gamma^-_k}{y^* \xi_t}\right)\right)\right] + \frac{A_k}{-||\boldsymbol{\theta}||_2 \sqrt{T-t}} \left[ \Phi'\left(d_1\left(\frac{\gamma^-_{k+1}}{y^*\xi_t}\right)\right) -\Phi'\left(d_1\left(\frac{\gamma^+_k}{y^*\xi_t}\right)\right)\right]\\
			&\quad \quad+\frac{ a_k-A_k}{||\boldsymbol{\theta}||_2 \sqrt{T-t}} \Phi'\left(d_1\left(\frac{\gamma_k^{+}}{y^*\xi_t}\right)\right)+ 
			\frac{ a_{k+1}-A_k}{-||\boldsymbol{\theta}||_2 \sqrt{T-t}} \Phi'\left(d_1\left(\frac{\gamma_{k+1}^{-}}{y^*\xi_t}\right)\right) +R_k^{-1} X^R_{t,k}e^{r(T-t)}\\
			&=R_k^{-1} X^R_{t,k}e^{r(T-t)}
			+\frac{a_{k+1}}{-||\boldsymbol{\theta}||_2\sqrt{T-t}}\Phi'\left(d_1\left(\frac{\gamma_{k+1}^{-}}{y^*\xi_t}\right)\right) - \frac{a_k}{-||\boldsymbol{\theta}||_2\sqrt{T-t}}\Phi'\left(d_1\left(\frac{\gamma_k^{-}}{y^*\xi_t}\right)\right),
		\end{aligned}
	\end{equation}
	where the second equality holds because $\frac{\Phi'(d_1 (\frac{U'(x)}{y^*\xi_t}))}{\Phi'(d^k(\frac{U'(x)}{y^*\xi_t}))} (x-A_k)$ is constant over $(a_k, a_{k+1})$; especially, 
	$
	\frac{\Phi'(d_1 (\frac{\gamma_{k+1}^{-}}{y^*\xi_t} ))}{\Phi'(d^k (\frac{\gamma_{k+1}^{-}}{y^*\xi_t}))}(a_{k+1}-A_k)  =\frac{\Phi'(d_1 (\frac{\gamma_k^{+}}{y^*\xi_t}))}{\Phi' (d^k (\frac{\gamma_k^{+}}{y^*\xi_t}))} (a_{k}-A_k).
	$
	Finally, adding up each term, we derive the optimal portfolio 
	vector:
	\begin{equation}\label{eq:port}
		\begin{aligned}
			\boldsymbol{\pi}^*_t &= -(\boldsymbol{\sigma}^\top)^{-1} \boldsymbol{\theta} \xi_t \frac{\partial X_t^*}{\partial\xi_t}\\
			&= (\boldsymbol{\sigma}^\top)^{-1} \boldsymbol{\theta} \sum_{k=0}^n \left(-\xi_t\right) \frac{\partial }{\partial\xi_t} \left( X^D_{t,k} + X^A_{t,k} + X^{\bar{A}}_{t,k} + X^R_{t,k} + X^{\bar{R}}_{t,k} \right)\\
			&= (\boldsymbol{\sigma}^\top)^{-1} \boldsymbol{\theta} \sum_{k=0}^n \bigg\{ R_k^{-1} X^R_{t,k}  + e^{-r(T-t)}\frac{a_k-a_{k+1}}{-||\boldsymbol{\theta}||_2\sqrt{T-t}}\Phi'\left(d_1\left(\frac{\gamma_{k+1}^{-}}{y^*\xi_t}\right)\right) \id_{\{ R_k = 0 \}} \\
			&\quad \quad \quad \quad \quad+ e^{-r(T-t)}\frac{1}{\alpha_k} \left[\Phi\left(d_1\left(\frac{\gamma_{k+1}^{-}}{y^*\xi_t}\right)\right) - \Phi\left(d_1\left(\frac{\gamma_k^{+}}{y^*\xi_t}\right)\right)\right] \id_{\{ R_k = \infty, A_k = -\infty, \alpha_k > 0 \}} \bigg\},
		\end{aligned}
	\end{equation}
	where we use facts that $R_n \neq 0$, $\gamma_{n+1}^{-}=0$ and $\gamma_0^{-}=\infty$.
\end{proof}

\subsection{Optimal portfolio: Proof of Theorem \ref{thm_main}}

\begin{proof}[Proof of Theorem \ref{thm_main}]
	If $R_k \in \{R, 0\}$, then the optimal portfolio vector \eqref{eq:main} is derived from the general case \eqref{eq:port} in the proof of Theorem \ref{thm_opt_strategy} by
	\begin{equation}
		\begin{aligned}
			\boldsymbol{\pi}^*_t &= (\boldsymbol{\sigma}^\top)^{-1} \boldsymbol{\theta} \sum_{k=0}^n \bigg\{ R^{-1} X^R_{t,k} \id_{\{ R_k \neq 0,\infty \}} + e^{-r(T-t)}\frac{a_k-a_{k+1}}{-||\boldsymbol{\theta}||_2\sqrt{T-t}}\Phi'\left(d_1\left(\frac{\gamma_{k+1}^{-}}{y^*\xi_t}\right)\right) \id_{\{ R_k = 0 \}} \bigg\}\\
			&= \frac{(\boldsymbol{\sigma}^\top)^{-1} \boldsymbol{\theta}}{R} \sum_{k=0}^n X^R_{t,k} \id_{\{ R_k \neq 0 \}} + \frac{(\boldsymbol{\sigma}^\top)^{-1} \boldsymbol{\theta}}{||\boldsymbol{\theta}||_2} \frac{e^{-r(T-t)} }{\sqrt{T-t}} \sum_{k=0}^n  (a_{k+1} - a_k) \Phi'\left(d_1\left(\frac{\gamma_{k+1}^{-}}{y^*\xi_t}\right)\right) \id_{\{ R_k = 0 \}} \\
			&= \frac{(\boldsymbol{\sigma}^\top)^{-1} \boldsymbol{\theta}}{R} \left(X_t^* - \sum_{k=0}^n X^A_{t,k} - \sum_{k=0}^n X^D_{t,k} \right) + \frac{(\boldsymbol{\sigma}^\top)^{-1} \boldsymbol{\theta}}{||\boldsymbol{\theta}||_2} \frac{e^{-r(T-t)} }{\sqrt{T-t}} \sum_{k=0}^n  (a_{k+1} - a_k) \Phi'\left(d_1\left(\frac{\gamma_{k+1}^{-}}{y^*\xi_t}\right)\right) \id_{\{ R_k = 0 \}} \\
			&= \frac{(\boldsymbol{\sigma}^\top)^{-1} \boldsymbol{\theta}}{R} X_t^* + \frac{(\boldsymbol{\sigma}^\top)^{-1} \boldsymbol{\theta}}{||\boldsymbol{\theta}||_2}\frac{e^{-r(T-t)} }{\sqrt{T-t}} \sum_{k=0}^n  (a_{k+1} - a_k) \Phi'\left(d_1\left(\frac{\gamma_{k+1}^{-}}{y^*\xi_t}\right)\right) \id_{\{ R_k = 0 \}} \\
			&\quad \quad - \frac{(\boldsymbol{\sigma}^\top)^{-1} \boldsymbol{\theta}}{R} e^{-r(T-t)} \sum_{k=0}^n A_k \left[\Phi\left(d_1\left(\frac{\gamma_{k+1}^-}{y^* \xi_t}\right)\right) - \Phi\left(d_1\left(\frac{\gamma_k^+}{y^* \xi_t}\right)\right)\right] \id_{\{ R_k \neq 0\} } \\
			&\quad\quad - \frac{(\boldsymbol{\sigma}^\top)^{-1} \boldsymbol{\theta}}{R} e^{-r(T-t)} \sum_{k=0}^n a_k \left[\Phi\left(d_1\left(\frac{\gamma_{k}^+}{y^* \xi_t}\right)\right) - \Phi\left(d_1\left(\frac{\gamma_k^-}{y^* \xi_t}\right)\right)\right].
		\end{aligned}
	\end{equation}
\end{proof}

\subsection{Asymptotic analysis: Proof of Theorem \ref{thm_option table}}
\begin{proof}[Proof of Theorem \ref{thm_option table}]
	As $R_k \in \{R, 0\}$, according to Proposition \ref{Thm_opt_wealth_process}, we have 
	\begin{equation}
		X_t^* = \sum_{k = 0}^n \( X_{t,k}^D + X_{t,k}^A + X_{t,k}^R \).
	\end{equation}
	Now we fix $t \in (0, T)$ and $y^* > 0$ and conduct an asymptotic analysis. 
	\begin{enumerate}[(i)]
		\item 
		Using \eqref{eq:d1}, for any $\gamma \in (0, \infty]$, as $\xi_t \rightarrow 0$, we have 
		\begin{equation}\label{eq:d1xi0}
			\frac{\gamma}{y^* \xi_t} \rightarrow \infty, \;\; d_1\(\frac{\gamma}{y^* \xi_t}\) \rightarrow -\infty, 
		\end{equation}
		and
		\begin{equation}\label{eq:phixi0}
			\Phi'\(d_1\(\frac{\gamma}{y^* \xi_t}\)\) \rightarrow 0, \;\; \Phi\(d_1\(\frac{\gamma}{y^* \xi_t}\)\) \rightarrow 0. 
		\end{equation}  
		Hence, for $\gamma \in \{\gamma_0^-, \gamma_0^+, \gamma_1^-, \gamma_1^+, \cdots, \gamma_n^-, \gamma_n^+\}$, we have \eqref{eq:d1xi0}-\eqref{eq:phixi0}. 
		For $\gamma = \gamma_{n+1}^- = 0$, for any $\xi_t \in (0, \infty)$, we have
		\begin{equation}
			\frac{\gamma_{n+1}^-}{y^* \xi_t} = 0, \;\; d_1\(\frac{\gamma_{n+1}^-}{y^* \xi_t}\) = \infty,  
		\end{equation}
		and
		\begin{equation}
			\Phi'\(d_1\(\frac{\gamma_{n+1}^-}{y^* \xi_t}\)\) = 0, \;\; \Phi\(d_1\(\frac{\gamma_{n+1}^-}{y^* \xi_t}\)\) = 1. 
		\end{equation} 
		Now we compute: for any $\gamma > 0$, 
		\begin{equation}\label{eq:phi'ratio}
			\frac{\Phi'\( d_1\(\frac{\gamma}{y^* \xi_t}\) \)}{\Phi'\( d^k\(\frac{\gamma}{y^* \xi_t}\) \)} = \exp\left\{(T-t) \( \frac{r}{R_k} + \frac{||\boldsymbol{\theta}||_2^2}{2 R_k^2}(1-R_k) \)\right\} \(\frac{y^* \xi_t}{\gamma}\)^{-\frac{1}{R_k}}.
		\end{equation}    
		Based on the expression \eqref{eq:wealth_five_term} of $X_t^*$, as $\xi_t \rightarrow 0$, for any $k \in \{0, 1, \cdots, n\}$, we have
		\begin{equation}
			X^D_{t,k} = e^{-r(T-t)} a_k  \left[ \Phi\left(d_1\left(\frac{\gamma^+_k}{y^*\xi_t}\right)\right) -\Phi\left(d_1\left(\frac{\gamma^-_k}{y^*\xi_t}\right)\right) \right] \rightarrow 0, 
		\end{equation}	
		\begin{equation}
			\begin{aligned}
				X^A_{t,k} &=  e^{-r(T-t)} A_k \left[ \Phi\left(d_1\left(\frac{\gamma^-_{k+1}}{y^*\xi_t}\right)\right) -\Phi\left(d_1\left(\frac{\gamma^+_k}{y^*\xi_t}\right)\right) \right] \id_{\left\{ R_k \neq 0,\infty \right\}}\\
				&\rightarrow \left\{
				\begin{aligned}
					& e^{-r(T-t)} A_n, && \text{ if } k = n;\\ 
					& 0, && \text{ if } k \neq n.
				\end{aligned}
				\right.
			\end{aligned}
		\end{equation}
		For any $k \in \{0, 1, \cdots, n-1\}$ with $R_k \neq 0$, by the expression of the PHARA utility on $[a_k, a_{k+1}]$, $(a_k-A_k)^{1-R_k}$ is well-defined and we have $a_k > A_k$. We also have $\gamma_k^+ > \gamma_{k+1}^-$ by strict concavity, which means $\Phi\left(d^k\left(\frac{\gamma^-_{k+1}}{y^*\xi_t}\right)\right) -\Phi\left(d^k\left(\frac{\gamma^+_k}{y^*\xi_t}\right)\right) > 0$. Hence, as $\xi_t \rightarrow 0$, we have  
		\begin{equation}\label{eq:X^R_tk}
			X^R_{t,k} =  e^{-r(T-t)} (a_k-A_k) \frac{\Phi'\left(d_1\left(\frac{\gamma^+_k}{y^*\xi_t}\right)\right)}{\Phi'\left(d^k\left(\frac{\gamma^+_k}{y^*\xi_t}\right)\right)}\left[ \Phi\left(d^k\left(\frac{\gamma^-_{k+1}}{y^*\xi_t}\right)\right) -\Phi\left(d^k\left(\frac{\gamma^+_k}{y^*\xi_t}\right)\right) \right] \id_{\left\{ R_k \neq 0,\infty \right\}} \geq 0.
		\end{equation}
		For $k=n$, according to the expression of \eqref{eq:phi'ratio}, as $\xi_t \rightarrow 0$, we have 
		\begin{equation}
			\frac{\Phi'\( d_1\(\frac{\gamma_n^+}{y^* \xi_t}\) \)}{\Phi'\( d^k\(\frac{\gamma_n^+}{y^* \xi_t}\) \)} \rightarrow \infty, 
		\end{equation}
		and
		\begin{equation}
			\Phi\left(d^k\left(\frac{\gamma^-_{n+1}}{y^*\xi_t}\right)\right) -\Phi\left(d^k\left(\frac{\gamma^+_n}{y^*\xi_t}\right)\right) = 
			1 -\Phi\left(d^k\left(\frac{\gamma^+_n}{y^*\xi_t}\right)\right) \rightarrow 1,
		\end{equation}
		which implies
		\begin{equation}\label{eq:X^R_tn}
			X^R_{t,n} =  e^{-r(T-t)} (a_n-A_n) \frac{\Phi'\left(d_1\left(\frac{\gamma^+_n}{y^*\xi_t}\right)\right)}{\Phi'\left(d^{n}\left(\frac{\gamma^+_n}{y^*\xi_t}\right)\right)}\left[ \Phi\left(d^{n}\left(\frac{\gamma^-_{n+1}}{y^*\xi_t}\right)\right) -\Phi\left(d^{n}\left(\frac{\gamma^+_n}{y^*\xi_t}\right)\right) \right] \id_{\left\{ R_n \neq 0,\infty \right\}} \rightarrow \infty.
		\end{equation}
		Hence, combining \eqref{eq:X^R_tk} and \eqref{eq:X^R_tn}, we have
		\begin{equation}
			X^R_t = \sum_{k=0}^n X_{t,k}^R \rightarrow \infty,
		\end{equation}
		and
		\begin{equation}
			X_t^* \rightarrow \infty.
		\end{equation}
		Moreover, according to the expressions of \eqref{eq:pkqk}, for any $k \in \{0, \cdots, n-1\}$, as $\xi_t \rightarrow 0$, we have 
		\begin{equation}
			p_k \rightarrow 0, \; q_k \rightarrow 0, 
		\end{equation}
		and
		\begin{equation}
			p_n \rightarrow 0, \; q_n \rightarrow 1, 
		\end{equation}
		where we use $\gamma_{n+1}^- = 0$. Thus, we obtain that if $\xi_t \rightarrow 0$, then
		\begin{equation}
			\boldsymbol{\pi}_t^{(2)} \rightarrow \mathbf{0},\;\; \boldsymbol{\pi}_t^{(3)} \rightarrow -\frac{A_n e^{-r(T-t)}}{R} (\boldsymbol{\sigma}^\top)^{-1} \boldsymbol{\theta},\;\; \frac{\boldsymbol{\pi}_t^*}{X_t^*} \rightarrow \frac{1}{R} (\boldsymbol{\sigma}^\top)^{-1} \boldsymbol{\theta}, \;\; \boldsymbol{\pi}_t \rightarrow \boldsymbol{\infty}.
		\end{equation}
		
		\item 
		An important point throughout this part of proof is that (a) $\gamma_0^- = \infty$ always holds; and (b) $\gamma_0^+ = \infty$ if and only if $a_0 = A_0$ and $R_0 \neq 0$. We can show (a)-(b) by Definition \ref{def_PHARA} of the PHARA utility.
		
		Using \eqref{eq:d1}, for any $\gamma \in [0, \infty)$, as $\xi_t \rightarrow \infty$, we have 
		\begin{equation}\label{eq:d1xi_inf}
			\frac{\gamma}{y^* \xi_t} \rightarrow 0, \;\; d_1\(\frac{\gamma}{y^* \xi_t}\) \rightarrow \infty, 
		\end{equation}
		and
		\begin{equation}\label{eq:phixi_inf}
			\Phi'\(d_1\(\frac{\gamma}{y^* \xi_t}\)\) \rightarrow 0, \;\; \Phi\(d_1\(\frac{\gamma}{y^* \xi_t}\)\) \rightarrow 1. 
		\end{equation}  
		Hence, for $\gamma \in \{\gamma_1^-, \gamma_1^+, \cdots, \gamma_n^-, \gamma_n^+, \gamma_{n+1}^-\}$, we have \eqref{eq:d1xi_inf}-\eqref{eq:phixi_inf}. 
		For $\gamma = \infty$, for any $\xi_t \in (0, \infty)$, we have
		\begin{equation}
			\frac{\infty}{y^* \xi_t} = \infty, \;\; d_1\(\frac{\infty}{y^* \xi_t}\) = -\infty,  
		\end{equation}
		\begin{equation}
			\Phi'\(d_1\(\frac{\infty}{y^* \xi_t}\)\) = 0, \;\; \Phi\(d_1\(\frac{\infty}{y^* \xi_t}\)\) = 0. 
		\end{equation} 
		For any $\gamma \in [0, \infty)$, as $\xi_t \rightarrow \infty$,
		\begin{equation}\label{eq:phi'ratio_xiinf}
			\frac{\Phi'\( d_1\(\frac{\gamma}{y^* \xi_t}\) \)}{\Phi'\( d^k\(\frac{\gamma}{y^* \xi_t}\) \)} = \exp\left\{(T-t) \( \frac{r}{R_k} + \frac{||\boldsymbol{\theta}||_2^2}{2 R_k^2}(1-R_k) \)\right\} \(y^* \xi_t\)^{-\frac{1}{R_k}} \gamma^{\frac{1}{R_k}} \rightarrow 0.
		\end{equation}    
		Based on the expression \eqref{eq:wealth_five_term} of $X_t^*$, as $\xi_t \rightarrow \infty$, for any $k \in \{1, \cdots, n\}$, we have
		\begin{equation}
			\begin{aligned}
				X^D_{t,k} &= e^{-r(T-t)} a_k  \left[ \Phi\left(d_1\left(\frac{\gamma^+_k}{y^*\xi_t}\right)\right) -\Phi\left(d_1\left(\frac{\gamma^-_k}{y^*\xi_t}\right)\right) \right] 
				\rightarrow 0,
			\end{aligned}
		\end{equation}	
		\begin{equation}
			\begin{aligned}
				X^A_{t,k} &=  e^{-r(T-t)} A_k \left[ \Phi\left(d_1\left(\frac{\gamma^-_{k+1}}{y^*\xi_t}\right)\right) -\Phi\left(d_1\left(\frac{\gamma^+_k}{y^*\xi_t}\right)\right) \right] \id_{\left\{ R_k \neq 0,\infty \right\}} \rightarrow 0,
			\end{aligned}
		\end{equation}
		\begin{equation}\label{eq:X^R_tk_xiinf}
			X^R_{t,k} =  e^{-r(T-t)} (a_k-A_k) \frac{\Phi'\left(d_1\left(\frac{\gamma^+_k}{y^*\xi_t}\right)\right)}{\Phi'\left(d^k\left(\frac{\gamma^+_k}{y^*\xi_t}\right)\right)}\left[ \Phi\left(d^k\left(\frac{\gamma^-_{k+1}}{y^*\xi_t}\right)\right) -\Phi\left(d^k\left(\frac{\gamma^+_k}{y^*\xi_t}\right)\right) \right] \id_{\left\{ R_k \neq 0,\infty \right\}} \rightarrow 0.
		\end{equation}
		
		For $k = 0$, if $\gamma_0^+ \in [0,\infty)$, then $a_0 > A_0$ or $R_0 = 0$. Because $\gamma_0^- = \infty$, we have
		\begin{equation}
			\begin{aligned}
				X^D_{t,0} &= e^{-r(T-t)} a_0  \left[ \Phi\left(d_1\left(\frac{\gamma^+_0}{y^*\xi_t}\right)\right) -\Phi\left(d_1\left(\frac{\gamma^-_0}{y^*\xi_t}\right)\right) \right] 
				\rightarrow e^{-r(T-t)} a_0,
			\end{aligned}
		\end{equation}	
		\begin{equation}
			\begin{aligned}
				X^A_{t,0} &=  e^{-r(T-t)} A_0 \left[ \Phi\left(d_1\left(\frac{\gamma^-_1}{y^*\xi_t}\right)\right) -\Phi\left(d_1\left(\frac{\gamma^+_0}{y^*\xi_t}\right)\right) \right] \id_{\left\{ R_0 \neq 0,\infty \right\}} \rightarrow 0,
			\end{aligned}
		\end{equation}
		\begin{equation}
			X^R_{t,0} =  e^{-r(T-t)} (a_0-A_0) \frac{\Phi'\left(d_1\left(\frac{\gamma^+_0}{y^*\xi_t}\right)\right)}{\Phi'\left(d^0\left(\frac{\gamma^+_0}{y^*\xi_t}\right)\right)}\left[ \Phi\left(d^0\left(\frac{\gamma^-_1}{y^*\xi_t}\right)\right) -\Phi\left(d^0\left(\frac{\gamma^+_0}{y^*\xi_t}\right)\right) \right] \id_{\left\{ R_0 \neq 0,\infty \right\}} \rightarrow 0.
		\end{equation}
		Hence, in this case,
		\begin{equation}\label{eq:gamma0_case1}
			X_t^* = \sum_{k = 0}^n \( X_{t,k}^D + X_{t,k}^A + X_{t,k}^R \) \rightarrow e^{-r(T-t)} a_0.
		\end{equation}
		If $\gamma_0^+ =\infty$, then $a_0 = A_0$, which implies $X_{t,0}^R = 0$ and 
		\begin{equation}
			X_t^* = \sum_{k = 0}^n \( X_{t,k}^D + X_{t,k}^A\) + \sum_{k = 1}^n X_{t,k}^R.
		\end{equation}
		As $\xi_t \rightarrow \infty$, we have
		\begin{equation}
			\begin{aligned}
				X^D_{t,0} &= e^{-r(T-t)} a_0  \left[ \Phi\left(d_1\left(\frac{\gamma^+_0}{y^*\xi_t}\right)\right) -\Phi\left(d_1\left(\frac{\gamma^-_0}{y^*\xi_t}\right)\right) \right] 
				= 0-0 = 0,
			\end{aligned}
		\end{equation}	
		\begin{equation}
			\begin{aligned}
				X^A_{t,0} &=  e^{-r(T-t)} A_0 \left[ \Phi\left(d_1\left(\frac{\gamma^-_1}{y^*\xi_t}\right)\right) -\Phi\left(d_1\left(\frac{\gamma^+_0}{y^*\xi_t}\right)\right) \right] \id_{\left\{ R_0 \neq 0,\infty \right\}} \rightarrow e^{-r(T-t)} A_0 = e^{-r(T-t)} a_0.
			\end{aligned}
		\end{equation}
		Hence, in this case, similarly to \eqref{eq:gamma0_case1}, we also have
		\begin{equation}\label{eq:gamma0_case2}
			X_t^* = \sum_{k = 0}^n \( X_{t,k}^D + X_{t,k}^A \)  + \sum_{k = 1}^n X_{t,k}^R \rightarrow e^{-r(T-t)} a_0.
		\end{equation}
		Moreover, according to the expressions of \eqref{eq:pkqk}, for any $k \in \{1, \cdots, n\}$, as $\xi_t \rightarrow \infty$, we have 
		\begin{equation}
			q_k \rightarrow 0, \;\;     p_k \rightarrow 0.
		\end{equation}
		For $k = 0$, as $\xi_t \rightarrow \infty$, because $\gamma_0^- =\infty$, we have
		\begin{equation}
			q_0 = \Phi\(d_1 \( \frac{\gamma_1^-}{y^* \xi_t} \) \) - \Phi\(d_1 \( \frac{\gamma_0^+}{y^* \xi_t} \)\)  
			\rightarrow \left\{
			\begin{aligned}    
				& 1-0 = 1, && \text{ if } \gamma_0^+ = \infty;\\ 
				& 1-1 = 0, && \text{ if } \gamma_0^+ \in [0, \infty), 
			\end{aligned}
			\right.   
		\end{equation}
		\begin{equation}
			p_0 = \Phi\(d_1 \( \frac{\gamma_0^+}{y^* \xi_t} \) \) - \Phi\(d_1 \( \frac{\gamma_0^-}{y^* \xi_t} \)\) = \Phi\(d_1 \( \frac{\gamma_0^+}{y^* \xi_t} \) \) 
			\rightarrow \left\{
			\begin{aligned}    
				& 0, && \text{ if } \gamma_0^+ = \infty;\\ 
				& 1, && \text{ if } \gamma_0^+ \in [0, \infty). 
			\end{aligned}
			\right.   
		\end{equation}
		Thus, we obtain that if $\xi_t \rightarrow \infty$, then
		\begin{equation}
			\boldsymbol{\pi}_t^{(4)} \rightarrow 
			\left\{
			\begin{aligned}
				& \mathbf{0}, && \text{ if } \gamma_0^+ = \infty;\\
				& -\frac{e^{-r(T-t)}}{R} (\boldsymbol{\sigma}^\top)^{-1} \boldsymbol{\theta} a_0, && \text{ if } \gamma_0^+ \in [0, \infty),
			\end{aligned}
			\right.
		\end{equation}
		\begin{equation}
			\boldsymbol{\pi}_t^{(3)} \rightarrow 
			\left\{
			\begin{aligned}
				& -\frac{e^{-r(T-t)}}{R} (\boldsymbol{\sigma}^\top)^{-1} \boldsymbol{\theta} a_0, && \text{ if } \gamma_0^+ = \infty;\\
				& \mathbf{0}, && \text{ if } \gamma_0^+ \in [0, \infty),
			\end{aligned}
			\right.
		\end{equation}
		\begin{equation}
			\boldsymbol{\pi}_t^{(2)} \rightarrow \mathbf{0},\;\; 
			\boldsymbol{\pi}_t^{(1)} \rightarrow \frac{1}{R} (\boldsymbol{\sigma}^\top)^{-1} \boldsymbol{\theta} e^{-r (T-t)} a_0.
		\end{equation}
		Hence, as $\xi_t^* \rightarrow \infty$, we have
		$$
		\boldsymbol{\pi}_t^* \rightarrow \mathbf{0}.
		$$
	\end{enumerate}
\end{proof}

\end{document}